\theoremstyle{remark}
\newtheorem{example}{\indent Example}
\newtheorem{definition}{\indent Definition}
\newtheorem{lemma}{\indent Lemma}
\newtheorem{corollary}{\indent Corollary}
\newtheorem{theorem}{\indent Theorem}
\newtheorem*{remark}{\indent Remark}
\newtheorem*{conjecture}{\indent Conjecture}
\begin{document}

\title{The Weight Distributions of Two Classes of Linear Codes From Perfect Nonlinear Functions}

\author{Huawei~Wu,~Jing~Yang~and~Keqin~Feng
  \thanks{H. Wu is with the Department of Mathematical Sciences, Tsinghua University, Beijing, 100084, China (e-mail: wu-hw18@mails.tsinghua.edu.cn)}%
  \thanks{J. Yang is with the Department of Mathematical Sciences, Tsinghua University, Beijing, 100084, China (e-mail: y-j@mail.tsinghua.edu.cn)}%
  \thanks{K. Feng is with the Department of Mathematical Sciences, Tsinghua University, Beijing, 100084, China (e-mail: fengkq@mail.tsinghua.edu.cn)}
  \thanks{The work of K. Feng is supported by the National Natural Science Foundation of China under Grant 12031011.}
}

\markboth{}%
{Shell \MakeLowercase{\textit{et al.}}: A Sample Article Using IEEEtran.cls for IEEE Journals}


\maketitle

\begin{abstract}
  In this paper, we employ general results on the value distributions of perfect nonlinear functions from $\mathbb{F}_{p^m}$ to $\mathbb{F}_p$  to give a unified approach to determining the weight distributions of two classes of linear codes over $\mathbb{F}_p$ constructed from perfect nonlinear functions, where $p$ is an odd prime and $m$ is an odd number. When $m$ is even, we give some mild additional conditions for similar conclusions to hold.
\end{abstract}

\begin{IEEEkeywords}
  Bent functions, exponential sums,  group actions, linear codes, minimum distance, perfect nonlinear, value distribution, Walsh transform, weakly regular bent, weight distribution.
\end{IEEEkeywords}

\section{Introduction}
\IEEEPARstart{L}{et} $A$ and $B$ be two finite abelian groups and let $f:A\rightarrow B$ be a mapping. We say that $f$ is perfect nonlinear if for any $a\in A\backslash\{0\}$, the difference function $D_af:A\rightarrow B$ given by $x\mapsto f(x+a)-f(x)$ is balanced, i.e., $|D_af^{-1}(b)|=|A|/|B|$ for any $b\in B$. If $|A|=|B|$, perfect nonlinear mappings from $A$ to $B$ are also called planar functions. Perfect nonlinearity is a robust measure of nonlinearity related to differential cryptanalysis, which was first introduced by Nyberg in \cite{nyberg1991perfect}. We refer to \cite{carlet2004highly} for an intensive survey on perfect nonlinear functions (and, more generally, highly nonlinear functions).\\
\indent Throughout this section, let $p$ be an odd prime number and let $q=p^m$ with $m\in\mathbb{N}_+$, where $\mathbb{N}_+$ is the set of positive integers. We are mainly interested in perfect nonlinear functions from $\mathbb{F}_q$ to itself. By the above definition, a function $f:\mathbb{F}_q\rightarrow\mathbb{F}_q$ is perfect nonlinear if for any $a\in\mathbb{F}_q^*$, the difference function $D_af:\mathbb{F}_q\rightarrow\mathbb{F}_q$ is bijective.

\indent It is quite difficult to construct new perfect nonlinear functions. Up to now, all known perfect nonlinear functions over $\mathbb{F}_q$ are affine equivalent to one of the following functions:
\begin{enumerate}
  \item Dembowski-Ostrom type:
        $$\Pi_1(x)=\sum\limits_{0\le i\le j\le m-1}a_{ij}x^{p^i+p^j},$$
        where $a_{ij}\in\mathbb{F}_{q}$ satisfy specific restrictions (see \cite{budaghyan2008new}, \cite{coulter2007planar}, \cite{dembowski1968planes}, \cite{ding2006family}, \cite{zha2009perfect});
  \item Coulter-Matthews type: $\Pi_2(x)=x^{\frac{3^k+1}{2}}$, where $p=3$, $k$ is odd and $\gcd(m,k)=1$ (see \cite{coulter1997planar}).
\end{enumerate}

\indent Perfect nonlinear functions can also be used to construct linear codes with good parameters. Let $\Pi:\mathbb{F}_{q}\rightarrow\mathbb{F}_q$ be a perfect nonlinear function. Carlet, Ding and Yuan studied in \cite{carlet2005linear} the following two classes of linear codes over $\mathbb{F}_{p}$:
$$C_{\Pi}=\{c_{a,b}=(f_{a,b}(x))_{x\in\mathbb{F}_q^{*}}:\ a,b\in\mathbb{F}_{q}\},$$
where
$$f_{a,b}(x)={\rm{Tr}}_{\mathbb{F}_q/\mathbb{F}_{p}}(a\Pi(x)+bx),$$
and
$$\overline{C_{\Pi}}=  \{c_{a,b,c}=(f_{a,b,c}(x))_{x\in\mathbb{F}_q}:  \ a,b, c\in\mathbb{F}_{q}\},$$
where
$$f_{a,b,c}(x)={\rm{Tr}}_{\mathbb{F}_q/\mathbb{F}_{p}}(a\Pi(x)+bx+c).$$
They showed that $C_{\Pi}$ is a $[q-1,2m]_p$ linear code if $\Pi(0)=0$ and $\overline{C_{\Pi}}$ is a $[q,1+2m]_p$ linear code, and developed bounds on the nonzero Hamming weights of $C_{\Pi}$ and $\overline{C_{\Pi}}$. They also studied the dual codes of $C_{\Pi}$ and $\overline{C_{\Pi}}$, proving results on the minimum distances.\\
\indent In \cite{yuan2006weight}, Carlet, Ding and Yuan determined the weight distribution of $C_{\Pi}$ when $\Pi$ is a perfect nonlinear function of one of the following forms:
\begin{enumerate}
  \item $\Pi(x)=x^{p^k+1}$, where $k\in\mathbb{N}$ and $m/\gcd(m,k)$ is an odd number;
  \item $\Pi(x)=x^{\frac{3^k+1}{2}}$, where $p=3$, $k$, $m$ are odd and $\gcd(m,k)=1$;
  \item $\Pi(x)=x^{10}-ux^6-u^2x^2$, where $p=3$, $m$ is odd and $u\in\mathbb{F}_q^*$.
\end{enumerate}
The authors did that on a case-by-case basis and left the Coulter-Matthews type of perfect nonlinear functions, where $m$ is even, as an open problem. They also demonstrated that the linear codes $C_{\Pi}$ constructed from the above three types of perfect nonlinear functions are either optimal or among the best codes known. Here, an $[n,k,d]$ linear code over $\mathbb{F}_p$ is called optimal if there does not exist an $[n,k,d']$ linear code over $\mathbb{F}_p$ such that $d'>d$. The term "best code" can be understood in a similar manner.\\
\indent In \cite{feng2007value}, Feng and Luo calculated the value distributions of certain exponential sums derived from perfect nonlinear functions and utilized their results to establish a unified method of determining the weight distribution of $C_{\Pi}$ when $\Pi$ is a perfect nonlinear function of Dembowski-Ostrom type or Coulter-Matthews type.\ Up to then, the weight distributions of the linear codes $C_{\Pi}$ were completely determined for all known perfect nonlinear functions.\\
\indent In \cite{li2008covering}, Li, Ling and Qu employed a unified approach to determining the weight distributions of $C_{\Pi}$ and $\overline{C_{\Pi}}$ for all the perfect nonlinear functions in the list treated in \cite{yuan2006weight}. They were the first to determine the weight distributions of the linear codes $\overline{C_{\Pi}}$ and their method of determining the weight distributions of $C_{\Pi}$ was new.\\
\indent The main tools of the previous studies mentioned above were exponential sums and quadratic forms over finite fields. The main drawback of these studies is that they depend on the specific form of the perfect nonlinear function $\Pi$ and thus they were all done on a case-by-case basis; in particular, their conclusions hold only for known perfect nonlinear functions.\\
\indent However, as shown in \cite{yuan2006weight}, \cite{feng2007value} and \cite{li2008covering}, the results on the weight distributions of $C_{\Pi}$ and $\overline{C_{\Pi}}$ are the same for all known perfect nonlinear functions, respectively. Hence, it is natural to ask whether these results hold for an arbitrary perfect nonlinear function and whether we can find a unified approach to proving them rather than doing that on a case-by-case basis.\\
\indent In \cite{li2008properties}, when $p=3$, only using the assumption that $\Pi$ has perfect nonlinearity, the authors adopted a new approach to determining the weight distribution of $C_{\Pi}$ by determining all the possibilities for the value distribution of a perfect nonlinear function from $\mathbb{F}_{3^m}$ to $\mathbb{F}_3$. Their starting point was the following lemma concerning the value distribution of a perfect nonlinear mapping.
\begin{lemma}[{\cite[Theorem 9]{carlet2004highly}}]
  Let $(A,+)$ and $(B,+)$ be abelian groups of order $n$ and $m$, respectively, where $m$ divides $n$, and let $f:A\rightarrow B$ be a perfect nonlinear mapping. For any $b\in B$, put $k_b=|f^{-1}(b)|$. Then we have
  \begin{equation}\label{20230531equationone}
    \begin{cases}
      \sum\limits_{z\in B}k_z^2=\frac{n^2+(m-1)n}{m},                                   \\
      \sum\limits_{z\in B}k_zk_{z+b}=\frac{n(n-1)}{m},\ \forall\ b\in B\backslash\{0\}, \\
      \sum\limits_{z\in B}k_z=n.
    \end{cases}
  \end{equation}
\end{lemma}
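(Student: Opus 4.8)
The plan is to prove all three identities by a single double-counting argument, classifying the ordered pairs in $A\times A$ both by the value of $f$ and by their difference in $A$. The third identity $\sum_{z\in B}k_z=n$ is immediate: the fibers $f^{-1}(z)$ partition $A$, so their cardinalities sum to $|A|=n$.

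For the remaining two identities, for each $b\in B$ I would introduce the counting quantity
$$N_b=\bigl|\{(x,y)\in A\times A:\ f(y)-f(x)=b\}\bigr|.$$
Counting $N_b$ by first fixing $z=f(x)$, which forces $f(y)=z+b$, gives $N_b=\sum_{z\in B}k_zk_{z+b}$; in particular $N_0=\sum_{z\in B}k_z^2$. This recovers the right-hand combinatorial objects appearing in the first two equations.

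Next I would recount $N_b$ by grouping the pairs according to their difference $a=y-x\in A$, so that $N_b=\sum_{a\in A}\bigl|\{x\in A:\ D_af(x)=b\}\bigr|$. This is where the perfect nonlinearity enters. For each $a\neq 0$ the difference function $D_af$ is balanced, so every one of its fibers has size exactly $|A|/|B|=n/m$, and the $n-1$ nonzero values of $a$ contribute $(n-1)\cdot\frac{n}{m}$ in total. The term $a=0$ must be handled separately, since $D_0f\equiv 0$: it contributes $n$ when $b=0$ and nothing when $b\neq 0$. Equating the two counts yields $N_0=n+\frac{n(n-1)}{m}=\frac{n^2+(m-1)n}{m}$ and $N_b=\frac{n(n-1)}{m}$ for every $b\neq 0$, which are precisely the first two identities.

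The argument presents no serious obstacle; the only points requiring care are isolating the degenerate $a=0$ term from the balanced contributions and using the hypothesis $m\mid n$ to ensure that $n/m$ is a genuine integer, so that the balance condition $|D_af^{-1}(b)|=n/m$ is consistent.
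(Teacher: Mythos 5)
Your proof is correct and complete: the double count of $N_b=\bigl|\{(x,y)\in A\times A:\ f(y)-f(x)=b\}\bigr|$, once by the fiber sizes $k_z$ and once by the differences $a=y-x$ with the degenerate term $a=0$ isolated, yields exactly the three identities. The paper itself gives no proof of this lemma (it is quoted from \cite[Theorem 9]{carlet2004highly}), and your argument is the standard one for that result, so there is nothing further to reconcile.
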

\noindent In general, the equations in the second row of (\ref{20230531equationone}) are not symmetric, causing difficulty in solving them. However, if  $A=\mathbb{F}_{3^m}$ and $B=\mathbb{F}_3$, then the equations in (\ref{20230531equationone}) are simple enough (in particular, they are symmetric) for us to determine all the solutions by converting the problem into finding all the integer representations of $3^{m-1}$ by the binary quadratic form
$$X^2+XY+Y^2.$$
This strategy cannot be generalized to arbitrary $p$ since the equations in (\ref{20230531equationone}) are difficult to solve.\\
\indent In this paper, we employ general results on the value distributions of perfect nonlinear functions from $\mathbb{F}_{q}$ to $\mathbb{F}_p$ to determine the weight distributions of $C_{\Pi}$ and $\overline{C_{\Pi}}$ only using the assumption that $\Pi$ has perfect nonlinearity for all odd primes $p$ when $m$ is odd. When $m$ is even, we give some mild additional conditions for similar conclusions to hold. Moreover, for the linear code $\overline{C_{\Pi}}$, we can say more about its codewords than just determining the weight distribution from the perspective of a group action. This paper not only provides results that go beyond the literature,
but also presents some deep insights on these two classes of linear codes.

\section{The Value Distributions of Perfect Nonlinear Functions}

\indent There is a characterization of perfect nonlinearity by means of the Fourier transform on finite abelian groups. For this, let us first recall some basic definitions and results.\\
\indent Let $G$ be a finite abelian group and let $f:G\rightarrow\mathbb{C}$ be a complex-valued function. The Fourier transform of $f$ is defined by the complex-valued function
\begin{align*}
  \hat{f}:\widehat{G} & \rightarrow\mathbb{C},                  \\
  \chi                & \mapsto\sum\limits_{g\in G}f(g)\chi(g),
\end{align*}
where $\widehat{G}$ is the group of characters of $G$, which is (in general, non-canonically) isomorphic to $G$.\ If $F:A\rightarrow B$ is a mapping between two finite abelian groups, then for any $\chi\in\widehat{B}$, we have the complex-valued function $F_{\chi}:=\chi\circ F:A\rightarrow\mathbb{C}$. We define the Walsh transform $W_F:\widehat{A}\times\widehat{B}\rightarrow\mathbb{C}$ of $F$ by
$$W_F(\phi,\chi)=\widehat{F_{\chi}}(\phi).$$
The set
$$\{W_F(\phi,\chi):\ \phi\in\widehat{A},\ \chi\in\widehat{B}\backslash\{1_{\widehat{B}}\}\}$$
is called the Walsh spectrum of $F$, where $1_{\widehat{B}}$ is the trivial character of $B$ sending all elements of $B$ to $1\in\mathbb{C}$.\\
\indent The following theorem gives a characterization of perfect nonlinearity using the Walsh spectrum.

\begin{theorem}[{\cite[Theorem 16]{carlet2004highly}}]\label{20230602themone}
  Let $F:A\rightarrow B$ be a mapping between two finite abelian groups. Then $F$ is perfect nonlinear if and only if $|W_F(\phi,\chi)|=\sqrt{|A|}$ for any $\phi\in\widehat{A}$ and $\chi\in\widehat{B}\backslash\{1_{\widehat{B}}\}$.
\end{theorem}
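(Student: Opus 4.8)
The plan is to prove both directions of the equivalence by relating the Walsh transform values to the balancedness condition defining perfect nonlinearity, using the orthogonality relations for characters of finite abelian groups. The central computational device will be the ``square modulus'' sum $\sum_{\phi\in\widehat{A}}|W_F(\phi,\chi)|^2$, which by Parseval's identity equals $|A|$ times a sum of autocorrelation-type quantities, and whose evaluation ties directly to the difference functions $D_aF$.

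First I would fix a nontrivial character $\chi\in\widehat{B}\backslash\{1_{\widehat{B}}\}$ and expand $|W_F(\phi,\chi)|^2 = W_F(\phi,\chi)\overline{W_F(\phi,\chi)}$ as a double sum
\begin{equation*}
  |W_F(\phi,\chi)|^2=\sum_{x,y\in A}\chi(F(x))\overline{\chi(F(y))}\,\phi(x)\overline{\phi(y)}.
\end{equation*}
Summing over all $\phi\in\widehat{A}$ and invoking the orthogonality relation $\sum_{\phi\in\widehat{A}}\phi(x)\overline{\phi(y)}=|A|\,\delta_{x,y}$ collapses the expression to $\sum_{\phi\in\widehat{A}}|W_F(\phi,\chi)|^2=|A|^2$, which holds for \emph{every} mapping $F$ and is just Parseval. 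The perfect nonlinearity hypothesis must therefore be extracted from a finer invariant. The natural choice is to compute, for a fixed $a\in A$, the autocorrelation sum $\sum_{\phi\in\widehat{A}}|W_F(\phi,\chi)|^2\,\phi(a)$; reindexing via $x=y+a$ and applying orthogonality yields
\begin{equation*}
  \sum_{\phi\in\widehat{A}}|W_F(\phi,\chi)|^2\,\phi(a)=|A|\sum_{y\in A}\chi\big(F(y+a)-F(y)\big)=|A|\sum_{y\in A}\chi\big(D_aF(y)\big).
\end{equation*}
This identity is the crux: it converts the Walsh moments into character sums over the difference functions.

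Next I would exploit that $D_aF$ is balanced for $a\neq 0$ precisely when perfect nonlinearity holds. If $F$ is perfect nonlinear, then for $a\neq 0$ the value $\chi(D_aF(y))$ is equidistributed over $B$, so $\sum_{y\in A}\chi(D_aF(y))=\frac{|A|}{|B|}\sum_{b\in B}\chi(b)=0$ since $\chi$ is a nontrivial character of $B$. Hence the right-hand side vanishes for all $a\neq 0$, while for $a=0$ it equals $|A|^2$. Combining these, the function $a\mapsto\sum_{\phi\in\widehat{A}}|W_F(\phi,\chi)|^2\phi(a)$ equals $|A|^2\delta_{a,0}$, and by Fourier inversion on $\widehat{A}$ this forces $|W_F(\phi,\chi)|^2=|A|$ for every $\phi$, giving $|W_F(\phi,\chi)|=\sqrt{|A|}$. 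For the converse, I would run the same identity in reverse: assuming $|W_F(\phi,\chi)|=\sqrt{|A|}$ for all $\phi$ and all nontrivial $\chi$, the moment sum forces $\sum_{y\in A}\chi(D_aF(y))=0$ for every $a\neq 0$ and every nontrivial $\chi$; since the constant character also gives the trivial relation, one concludes that each fiber count $|D_aF^{-1}(b)|$ has vanishing Fourier transform except at the trivial character, which by Fourier inversion forces $D_aF$ to be balanced, i.e.\ $F$ is perfect nonlinear.

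The main obstacle I anticipate is purely bookkeeping rather than conceptual: one must be careful that the characters $\chi$ range over $\widehat{B}$ (not $\widehat{A}$) and that the two orthogonality relations are applied on the correct group, keeping the interplay between $\widehat{A}$ and $\widehat{B}$ straight. In particular, the converse direction requires arguing for \emph{each fixed} nontrivial $\chi\in\widehat{B}$ and then aggregating over all such $\chi$ to reconstruct balancedness of $D_aF$ via Fourier inversion on $B$; verifying that the vanishing of all nontrivial Fourier coefficients of $a\mapsto|D_aF^{-1}(\cdot)|$ indeed yields the uniform fiber size $|A|/|B|$ is the one spot where the hypothesis $|B|\mid|A|$ implicitly enters and should be stated cleanly.
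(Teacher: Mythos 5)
Your proof is correct, but note that there is nothing in the paper to compare it against: the paper imports this statement verbatim as Theorem 16 of \cite{carlet2004highly} and gives no proof of its own. Your argument---Parseval, the autocorrelation identity $\sum_{\phi\in\widehat{A}}|W_F(\phi,\chi)|^2\,\overline{\phi(a)}=|A|\sum_{y\in A}\chi\big(D_aF(y)\big)$, Fourier inversion on $\widehat{A}$ for the forward direction, and Fourier inversion on $B$ (after aggregating over all nontrivial $\chi$) for the converse---is the standard proof of this classical fact and is essentially the argument given in the cited source. Two minor points of bookkeeping: depending on whether you weight the moment sum by $\phi(a)$ or $\overline{\phi(a)}$, the reindexing produces $D_aF$ or $D_{-a}F$, which is immaterial since $a$ ranges over all of $A\backslash\{0\}$; and the divisibility $|B|\,\big|\,|A|$ is not a hypothesis your inversion step needs---it falls out as a consequence, since the fiber count $|D_aF^{-1}(b)|$ is an integer and inversion forces it to equal $|A|/|B|$ for every $b$.
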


\indent With the abstract theory out of the way, let us now focus on the finite field case.\ Let $p$ be a prime number and let $q=p^m$ with $m\in\mathbb{N}_+$.\\
\indent If we fix a primitive $p$-th root of unity $\xi_p$ in $\mathbb{C}$, then we have a natural isomorphism $\mathbb{F}_q\rightarrow\widehat{\mathbb{F}_q}$, $a\mapsto\psi_a$, where $\psi_a$ is given by
\begin{align*}
  \psi_a:\mathbb{F}_q & \rightarrow\mathbb{C},                                      \\
  x                   & \mapsto\xi_p^{\mathrm{Tr}_{\mathbb{F}_q/\mathbb{F}_p}(ax)}.
\end{align*}
Therefore, if $f$ is a complex-valued function on $\mathbb{F}_q$, we usually define its Fourier transform by
\begin{align*}
  \hat{f}:\mathbb{F}_q & \rightarrow\mathbb{C}                                                                       \\
  a                    & \mapsto\sum\limits_{x\in\mathbb{F}_q}f(x)\xi_p^{{\rm{Tr}}_{\mathbb{F}_q/\mathbb{F}_p}(ax)},
\end{align*}
which is a complex-valued function on $\mathbb{F}_q$ rather than one on $\widehat{\mathbb{F}_q}$.\ Let $h$ be a positive divisor of $m$. The Walsh transform of a function $F:\mathbb{F}_{q}\rightarrow\mathbb{F}_{p^h}$ is similarly defined by
\begin{align*}
  W_F(a,b) & =\sum\limits_{x\in\mathbb{F}_{q}}\psi_a(x)\psi_b(F(x))                                                                                 \\
           & =\sum\limits_{x\in\mathbb{F}_{q}}\xi_p^{{\rm{Tr}}_{\mathbb{F}_{p^h}/\mathbb{F}_p}(bF(x))+{\rm{Tr}}_{\mathbb{F}_{q}/\mathbb{F}_p}(ax)},
\end{align*}
with $a\in\mathbb{F}_{q}$ and $b\in\mathbb{F}_{p^h}$. Note that, in the literature, $W_F(a,b)$ is usually defined by
$$W_F(a,b)=\sum\limits_{x\in\mathbb{F}_{q}}\xi_p^{{\rm{Tr}}_{\mathbb{F}_{p^h}/\mathbb{F}_p}(bF(x))-{\rm{Tr}}_{\mathbb{F}_{q}/\mathbb{F}_p}(ax)}.$$
However, the Walsh spectra corresponding to both definitions are identical. By Theorem \ref{20230602themone}, $F$ is perfect nonlinear if and only if $|W_F(a,b)|=\sqrt{q}$ for any $a\in\mathbb{F}_{q}$ and $b\in\mathbb{F}_{p^h}^*$.\\
\indent Recall that a function $f:\mathbb{F}_q\rightarrow\mathbb{F}_p$ is called bent if $W_f(a):=W_f(a,1)$ has absolute value $\sqrt{q}$ for any $a\in\mathbb{F}_q$. The following lemma shows the equivalence between perfect nonlinearity and bentness.

\begin{lemma}[{\cite[Theorem 2.3]{nyberg1991constructions}}]
  Let $f:\mathbb{F}_q\rightarrow\mathbb{F}_p$ be a function. Then $f$ is perfect nonlinear if and only if it is bent.
\end{lemma}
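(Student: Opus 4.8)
The plan is to obtain the equivalence directly from Theorem~\ref{20230602themone}, avoiding any fresh computation of exponential sums. Specializing that theorem to $A=\mathbb{F}_q$ and $B=\mathbb{F}_p$, and noting that the nontrivial characters of $\mathbb{F}_p$ are precisely the $\psi_b$ with $b\in\mathbb{F}_p^*$, we learn that $f$ is perfect nonlinear if and only if $|W_f(a,b)|=\sqrt{q}$ for all $a\in\mathbb{F}_q$ and all $b\in\mathbb{F}_p^*$. Since bentness is exactly the case $b=1$ of this condition, the implication ``perfect nonlinear $\Rightarrow$ bent'' is immediate. The entire content of the lemma is therefore the converse: to promote the single value $b=1$ to every $b\in\mathbb{F}_p^*$.

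For the converse I would exploit a Galois symmetry of the Walsh transform. For $b\in\mathbb{F}_p^*$, let $\sigma_b$ be the automorphism of $\mathbb{Q}(\xi_p)$ with $\sigma_b(\xi_p)=\xi_p^b$. Applying $\sigma_b$ term by term to $W_f(a,1)=\sum_{x\in\mathbb{F}_q}\xi_p^{f(x)+\mathrm{Tr}_{\mathbb{F}_q/\mathbb{F}_p}(ax)}$ and using $\mathrm{Tr}_{\mathbb{F}_q/\mathbb{F}_p}(bax)=b\,\mathrm{Tr}_{\mathbb{F}_q/\mathbb{F}_p}(ax)$, I expect to arrive at the identity $\sigma_b\bigl(W_f(a,1)\bigr)=W_f(ba,b)$. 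Because $b\neq0$, as $a$ runs through $\mathbb{F}_q$ the element $ba$ runs through all of $\mathbb{F}_q$, so every value $W_f(a',b)$ is $\sigma_b$ applied to some $W_f(a,1)$.

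The step I expect to require the most care, and the real crux of the argument, is checking that $\sigma_b$ preserves the modulus; this is where bentness finally enters. Since $\mathrm{Gal}(\mathbb{Q}(\xi_p)/\mathbb{Q})$ is abelian, complex conjugation (which is the automorphism $\sigma_{-1}$) lies in its center and hence commutes with $\sigma_b$; consequently $|\sigma_b(z)|^2=\sigma_b(z)\,\overline{\sigma_b(z)}=\sigma_b(z)\,\sigma_b(\bar z)=\sigma_b(z\bar z)=\sigma_b\bigl(|z|^2\bigr)$ for every $z\in\mathbb{Q}(\xi_p)$. Taking $z=W_f(a,1)$ and invoking bentness gives $|z|^2=q\in\mathbb{Q}$, which $\sigma_b$ fixes, so $|W_f(ba,b)|^2=\sigma_b(q)=q$. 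Letting $a$ and $b$ vary yields $|W_f(a',b)|=\sqrt{q}$ for all $a'\in\mathbb{F}_q$ and $b\in\mathbb{F}_p^*$, and Theorem~\ref{20230602themone} then delivers perfect nonlinearity.

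As an alternative that sidesteps Theorem~\ref{20230602themone} entirely, one could argue straight from the definitions: expanding $|W_f(a,1)|^2=\sum_{t\in\mathbb{F}_q}\xi_p^{\mathrm{Tr}_{\mathbb{F}_q/\mathbb{F}_p}(at)}S(t)$ with $S(t)=\sum_{y\in\mathbb{F}_q}\xi_p^{D_tf(y)}$, bentness forces (by Fourier inversion) $S(t)=0$ for all $t\neq0$, whence $\sum_{c\in\mathbb{F}_p}|D_tf^{-1}(c)|\,\xi_p^c=0$. The subtle point in this route is the classical fact that a vanishing integral combination of the $p$-th roots of unity must have equal coefficients, which forces each fibre of $D_tf$ to have size $q/p$, i.e.\ balancedness; the reverse implication runs the same computation backwards.
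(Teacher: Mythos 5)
Your proposal is correct and takes essentially the same route as the paper's own proof: the forward direction is read off from Theorem~\ref{20230602themone}, and the converse uses the Galois automorphisms $\sigma_b$ of $\mathbb{Q}(\xi_p)$ together with the fact that complex conjugation lies in the (abelian) Galois group and hence commutes with $\sigma_b$, yielding $|W_f(ba,b)|^2=\sigma_b\bigl(|W_f(a,1)|^2\bigr)=\sigma_b(q)=q$, which is exactly the paper's identity $W_f(a,\lambda)=\sigma_\lambda\bigl(W_f(a/\lambda,1)\bigr)$ in a different parametrization. Your alternative sketch (Fourier inversion forcing $S(t)=0$ for $t\neq 0$, then equal coefficients in a vanishing integer combination of $p$-th roots of unity) is also sound, but your main argument is the paper's.
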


\indent This characterization of perfect nonlinear functions allows us to study their value distributions. Let $f:\mathbb{F}_q\rightarrow\mathbb{F}_p$ be a function. We say that the value distribution of $f$ is $(n_0,\cdots,n_{p-1})$ if $n_i=|f^{-1}(i)|$ for any $0\le i\le p-1$. The following two theorems describe completely the value distribution of a perfect nonlinear function from $\mathbb{F}_{q}$ to $\mathbb{F}_p$.

\begin{theorem}[{\cite[Theorem 3.2]{nyberg1991constructions}}]\label{20230602themfour}
  Assume that $m$ is even. Then for any bent function $f:\mathbb{F}_{q}\rightarrow\mathbb{F}_p$, there exists $s\in\{0,1,\cdots,p-1\}$ such that the value distribution of $f$  is $(n_0,n_1,\cdots,n_{p-1})$, where
  \begin{align}
     & n_s=p^{m-1}+(p-1)p^{\frac{m}{2}-1},           \label{20230603equationnine} \\
     & n_i=p^{m-1}-p^{\frac{m}{2}-1},\quad i\ne s,\notag
  \end{align}
  or
  \begin{align}
     & n_s=p^{m-1}-(p-1)p^{\frac{m}{2}-1},    \label{20230603equationten} \\
     & n_i=p^{m-1}+p^{\frac{m}{2}-1},\quad i\ne s.\notag
  \end{align}
\end{theorem}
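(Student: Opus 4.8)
The plan is to extract the entire value distribution from the single Walsh value at $a=0$. Write $S=W_f(0)=\sum_{x\in\mathbb{F}_q}\xi_p^{f(x)}=\sum_{i=0}^{p-1}n_i\xi_p^i\in\mathbb{Z}[\xi_p]$, so that bentness gives $S\overline{S}=|W_f(0)|^2=q=p^m$, an integer since $m$ is even. The goal is to show that this single algebraic identity, together with $\sum_i n_i=p^m$, pins $S$ down up to a root of unity and hence forces one of the two stated distributions.

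First I would locate $S$ in the ring of integers $\mathcal{O}=\mathbb{Z}[\xi_p]$ of $\mathbb{Q}(\xi_p)$. Since $p$ is totally ramified, $p\mathcal{O}=\mathfrak{p}^{p-1}$ with $\mathfrak{p}=(1-\xi_p)$ the unique prime above $p$, and complex conjugation (which equals $\sigma_{-1}$) fixes $\mathfrak{p}$. Comparing ideals in $S\overline{S}=p^m$ gives $(S)(\overline{S})=\mathfrak{p}^{m(p-1)}$; writing $(S)=\mathfrak{p}^a$ and applying conjugation shows $(\overline{S})=\mathfrak{p}^a$ as well, whence $a=m(p-1)/2$ and $(S)=\mathfrak{p}^{m(p-1)/2}=(p^{m/2})$. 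Therefore $S=u\,p^{m/2}$ for some unit $u\in\mathcal{O}^\times$.

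Next I would show that $u$ is a root of unity. Because $\mathrm{Gal}(\mathbb{Q}(\xi_p)/\mathbb{Q})$ is abelian, complex conjugation commutes with every $\sigma_t\colon\xi_p\mapsto\xi_p^t$; hence $|\sigma_t(S)|^2=\sigma_t(S\overline{S})=\sigma_t(p^m)=p^m$, so every conjugate of $u=S/p^{m/2}$ has absolute value $1$. By Kronecker's theorem an algebraic integer all of whose conjugates lie on the unit circle is a root of unity, and the only roots of unity in $\mathbb{Q}(\xi_p)$ for odd $p$ are $\pm\xi_p^s$. Thus $S=\varepsilon\,p^{m/2}\xi_p^s$ with $\varepsilon\in\{+1,-1\}$ and $s\in\{0,\dots,p-1\}$.

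Finally I would read off the $n_i$. The relation $\sum_i n_i\xi_p^i=\varepsilon\,p^{m/2}\xi_p^s$ is a $\mathbb{Z}$-linear dependence among $1,\xi_p,\dots,\xi_p^{p-1}$; since the only such dependences are integer multiples of $1+\xi_p+\cdots+\xi_p^{p-1}=0$, there is an integer $c$ with $n_i-\varepsilon p^{m/2}\delta_{i,s}=c$ for all $i$. Imposing $\sum_i n_i=p^m$ determines $c$, and the two sign choices $\varepsilon=\pm1$ yield exactly \eqref{20230603equationnine} and \eqref{20230603equationten}. The main obstacle is the middle step: establishing that $S/p^{m/2}$ is a root of unity, which is where the arithmetic of $\mathbb{Z}[\xi_p]$ (total ramification of $p$ and the classification of its roots of unity) must be combined with Kronecker's theorem; once $S$ is known to be $\pm p^{m/2}\xi_p^s$, the extraction of the distribution is routine linear algebra.
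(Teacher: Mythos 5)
Your proof is correct, and its final step coincides with the paper's, but you reach the key identity by a genuinely different route. The paper obtains $W_f(0)=\epsilon\,\xi_p^{s}\sqrt{q}$, $\epsilon\in\{\pm 1\}$, by citing Lemma~\ref{20230602lemmatwo} (the Kumar--Scholtz--Welch property of bent functions, \cite{kumar1985generalized}) as a black box, and then, exactly as you do, compares $\sum_{i}n_i\xi_p^{i}=\epsilon\,p^{m/2}\xi_p^{s}$ against the relation $1+\xi_p+\cdots+\xi_p^{p-1}=0$ and uses $\sum_i n_i=p^m$ to solve for the $n_i$. You instead prove the required special case of that lemma from scratch: unique ideal factorization together with the total ramification $p\mathbb{Z}[\xi_p]=\mathfrak{p}^{p-1}$ (and the fact that complex conjugation fixes the unique prime $\mathfrak{p}$ above $p$) give $(S)=(p^{m/2})$; abelianness of $\mathrm{Gal}(\mathbb{Q}(\xi_p)/\mathbb{Q})$ puts every conjugate of the unit $S/p^{m/2}$ on the unit circle; and Kronecker's theorem plus the classification of roots of unity in $\mathbb{Q}(\xi_p)$ force $S=\pm\xi_p^{s}p^{m/2}$. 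All of these steps are sound, including the implicit use of unique factorization to see that $(S)$ is a pure power of $\mathfrak{p}$, and the evenness of $m$ is used correctly twice (to make $\mathfrak{p}^{m(p-1)/2}$ a power of $(p)$ and to make $p^{m/2}$ an integer in the final coefficient comparison). Your route makes the theorem self-contained rather than dependent on the cited lemma, and it isolates the observation that only the single hypothesis $|W_f(0)|=\sqrt{q}$ is needed for the value-distribution conclusion, not bentness at every point. What the paper's route buys is brevity and uniformity across the parity of $m$: the same cited lemma also drives the odd-$m$ case (Theorem~\ref{20230602themtwo}), where your ideal argument would no longer yield a rational generator but a Gauss-sum factor $\mathfrak{p}^{(p-1)/2}$, and something like Lemma~\ref{20230602lemmathree} would still be needed to extract the Legendre-symbol distribution.
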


\begin{theorem}[{\cite[Theorem 3.4]{nyberg1991constructions}}]\label{20230602themtwo}
  Assume that $p$ and $m$ are both odd. Then for any bent function $f:\mathbb{F}_q\rightarrow\mathbb{F}_p$, there exists $s\in\{0,1,\cdots,p-1\}$ such that the value distribution of $f$ is  $(n_0,n_1,\cdots,n_{p-1})$, where
  \begin{align*}
    n_i=p^{m-1}+(\frac{i+s}{p})p^{\frac{m-1}{2}},\quad i=0,\ \cdots,\ p-1,
  \end{align*}
  or
  \begin{align*}
    n_i=p^{m-1}-(\frac{i+s}{p})p^{\frac{m-1}{2}},\quad i=0,\ \cdots,\ p-1.
  \end{align*}
  Here, $(\frac{\cdot}{p})$ is the Legendre symbol modulo $p$ and we establish the convention that $(\frac{0}{p})=0$.
\end{theorem}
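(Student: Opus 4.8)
The plan is to read off the entire value distribution from a single family of Walsh coefficients, namely those at $a=0$. Writing $\xi=\xi_p$ and $S_\lambda:=W_f(0,\lambda)=\sum_{x\in\mathbb{F}_q}\xi^{\lambda f(x)}=\sum_{i=0}^{p-1}n_i\xi^{\lambda i}$ for $\lambda\in\{1,\dots,p-1\}$, bentness (via Theorem~\ref{20230602themone}) gives $|S_\lambda|^2=q=p^m$, while $\sum_i n_i=p^m$ is immediate from counting. The crucial structural remark is that if $\sigma_\lambda\in\mathrm{Gal}(\mathbb{Q}(\xi)/\mathbb{Q})$ denotes the automorphism $\xi\mapsto\xi^\lambda$, then $S_\lambda=\sigma_\lambda(S_1)$; thus all the Galois conjugates of the algebraic integer $S_1\in\mathbb{Z}[\xi]$ have the same absolute value $p^{m/2}$. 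I would therefore work in the cyclotomic field $K=\mathbb{Q}(\xi)$ and pin down $S_1$ exactly.

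First I would exploit the arithmetic of $K$. Recall that $p$ is totally ramified in $K$, with $(p)=\mathfrak{p}^{p-1}$ where $\mathfrak{p}=(1-\xi)$ is the unique prime above $p$; in particular $\mathfrak{p}$ is fixed by every element of the Galois group, so the valuation $v_{\mathfrak{p}}$ is Galois-invariant. From $S_1\overline{S_1}=|S_1|^2=p^m$ and $\overline{S_1}=\sigma_{-1}(S_1)$ I get $v_{\mathfrak{p}}(S_1)=\tfrac{1}{2}v_{\mathfrak{p}}(p^m)=\tfrac{m(p-1)}{2}$, an integer since $p-1$ is even. The next ingredient is the quadratic Gauss sum $G=\sum_{t}(\frac{t}{p})\xi^t$, for which $G^2=(\frac{-1}{p})p$, hence $v_{\mathfrak{p}}(G)=\tfrac{p-1}{2}$ and $(G)=\mathfrak{p}^{(p-1)/2}$, while $|\sigma_\lambda(G)|=\sqrt p$ for every $\lambda$ because $\sigma_\lambda(G)=(\frac{\lambda}{p})G$. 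Here is where the oddness of $m$ enters: writing $\tfrac{m(p-1)}{2}=\tfrac{(m-1)(p-1)}{2}+\tfrac{p-1}{2}$ yields the ideal identity $(S_1)=(p^{(m-1)/2}G)$, so that $\eta:=S_1/(p^{(m-1)/2}G)$ is a unit of $\mathbb{Z}[\xi]$.

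Next I would show $\eta$ is a root of unity. For each $\lambda$, $|\sigma_\lambda(\eta)|=|S_\lambda|/(p^{(m-1)/2}|\sigma_\lambda(G)|)=p^{m/2}/(p^{(m-1)/2}\sqrt p)=1$, so every archimedean conjugate of $\eta$ has absolute value $1$; by Kronecker's theorem $\eta$ is a root of unity, and the roots of unity in $\mathbb{Q}(\xi_p)$ are exactly $\pm\xi^t$. Hence $S_1=\pm\,\xi^t p^{(m-1)/2}G=\pm\,p^{(m-1)/2}\sum_i(\frac{i-t}{p})\xi^i$. Comparing this with $S_1=\sum_i n_i\xi^i$ and using that the only $\mathbb{Z}$-linear relation among $1,\xi,\dots,\xi^{p-1}$ is $\sum_i\xi^i=0$, the coefficients must agree up to a common additive constant $c$, i.e. $n_i=\pm p^{(m-1)/2}(\frac{i-t}{p})+c$. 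Summing over $i$ and using $\sum_i(\frac{i-t}{p})=0$ together with $\sum_i n_i=p^m$ forces $c=p^{m-1}$. Setting $s=-t$ gives $n_i=p^{m-1}\pm(\frac{i+s}{p})p^{(m-1)/2}$, which are precisely the two asserted distributions, the sign choice producing the two cases.

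The main obstacle I anticipate is the middle step: establishing $(S_1)=(p^{(m-1)/2}G)$ and controlling the unit $\eta$. This requires the precise ramification picture of $p$ in $\mathbb{Q}(\xi_p)$, the exact $\mathfrak{p}$-valuation of the quadratic Gauss sum, and an appeal to Kronecker's theorem that genuinely uses \emph{all} the Walsh moduli $|S_\lambda|=\sqrt q$ rather than a single one. Everything afterward (the coefficient comparison and the bookkeeping with $\sum_i(\frac{i+s}{p})=0$) is routine. I would also note that running the same argument with $m$ even replaces $p^{(m-1)/2}G$ by $p^{m/2}$, thereby recovering the value distribution of Theorem~\ref{20230602themfour}.
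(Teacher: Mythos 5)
Your proof is correct, but it follows a genuinely different route from the paper's. The paper treats the key structural fact as a black box: by Lemma \ref{20230602lemmatwo} (cited from \cite{kumar1985generalized}), $W_f(0)=\pm\alpha\xi_p^{s}\sqrt{q}$ with $\alpha=1$ or $i$ according to $p\bmod 4$; after dividing by $\pm\xi_p^{s}p^{\frac{m-1}{2}}$ and re-indexing, Lemma \ref{20230602lemmathree} (whose content is exactly the uniqueness of the rational-coefficient representation of $\alpha\sqrt{p}$ in the basis $\xi_p,\dots,\xi_p^{p-1}$) gives $n_{i+s}-n_s=\pm(\frac{i}{p})p^{\frac{m-1}{2}}$, and $\sum_i n_i=p^m$ then forces $n_s=p^{m-1}$. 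You instead reconstruct the needed shape of $W_f(0)$ from scratch: your valuation argument at the totally ramified prime $\mathfrak{p}=(1-\xi_p)$, the comparison with the quadratic Gauss sum $G$, and Kronecker's theorem amount to a self-contained proof of the $a=0$ case of Lemma \ref{20230602lemmatwo}, while your explicit expansion $\xi_p^{t}G=\sum_i(\frac{i-t}{p})\xi_p^{i}$ plays precisely the role of Lemma \ref{20230602lemmathree}. Your route buys self-containedness and insight --- it shows where the Legendre-symbol pattern and the factor $p^{\frac{m-1}{2}}$ come from, and, as you observe, the same computation with $m$ even (where $(S_1)=(p^{m/2})$ and $\eta:=S_1/p^{m/2}$ is a root of unity) recovers Theorem \ref{20230602themfour} uniformly. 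The paper's route buys brevity, and it quotes Lemma \ref{20230602lemmatwo} in the form valid at every $a$, which the paper needs again later (in Corollary \ref{20230603corolone} and Theorem \ref{202363theoremsix}), not just at $a=0$.

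Two steps in your write-up deserve to be made explicit. First, equality of $\mathfrak{p}$-valuations alone does not yield the ideal identity $(S_1)=(p^{(m-1)/2}G)$: you also need that no prime other than $\mathfrak{p}$ divides $(S_1)$; this is immediate because $(S_1)$ divides $(S_1\overline{S_1})=(p)^m=\mathfrak{p}^{m(p-1)}$, but it should be said. Second, you do not actually need to import $|S_\lambda|=\sqrt{q}$ for all $\lambda$ as a separate consequence of perfect nonlinearity: since complex conjugation acts on $\mathbb{Q}(\xi_p)$ as $\sigma_{-1}$ and the Galois group is abelian, one has $|\sigma_\lambda(S_1)|^2=\sigma_\lambda(S_1)\,\sigma_\lambda(\overline{S_1})=\sigma_\lambda(|S_1|^2)=q$, so all the conjugate moduli required by Kronecker's theorem already follow from $|S_1|=\sqrt{q}$, i.e.\ from bentness at the single point $a=0$.
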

\begin{remark}
  Note that, in \cite{nyberg1991constructions}, Theorem \ref{20230602themtwo} was stated only for regular bent functions (see Definition \ref{20230604defone}). However, the proof can be easily generalized to show the validity of Theorem \ref{20230602themtwo} for  arbitrary bent functions.
\end{remark}

\indent The key points for proving Theorem \ref{20230602themfour} and Theorem \ref{20230602themtwo} are the following two lemmas.

\begin{lemma}[{\cite[Lemma before Theorem 3.4]{nyberg1991constructions}}]\label{20230602lemmathree}
  Assume that $p$ is odd. If there exist $a_1$, $\cdots$, $a_{p-1}\in\mathbb{Q}$ such that
  \begin{align*}
        & a_1\xi_p+a_2\xi_p^2+\cdots+a_{p-1}\xi_p^{p-1}                                   \\
    =\  & \begin{cases}
            \sqrt{p},  & \text{if $p\equiv1\ ({\rm{mod}}\ 4)$}, \\
            i\sqrt{p}, & \text{if $p\equiv3\ ({\rm{mod}}\ 4)$},
          \end{cases}
  \end{align*}
  then $a_i=(\frac{i}{p})$ for $i=1$, $\cdots$, $p-1$.
\end{lemma}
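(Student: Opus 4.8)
The plan is to reduce the statement to the classical evaluation of the quadratic Gauss sum, combined with the fact that $\xi_p,\xi_p^2,\dots,\xi_p^{p-1}$ form a $\mathbb{Q}$-basis of the cyclotomic field $\mathbb{Q}(\xi_p)$.

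First I would introduce the quadratic Gauss sum $G=\sum_{i=1}^{p-1}(\frac{i}{p})\xi_p^i$ (the index $i=0$ contributes nothing since $(\frac{0}{p})=0$). By the classical theorem of Gauss on the sign of the quadratic Gauss sum, one has $G=\sqrt{p}$ when $p\equiv1\ ({\rm mod}\ 4)$ and $G=i\sqrt{p}$ when $p\equiv3\ ({\rm mod}\ 4)$; in either case $G$ equals exactly the right-hand side appearing in the hypothesis. Hence the hypothesis can be rewritten as $\sum_{i=1}^{p-1}a_i\xi_p^i=\sum_{i=1}^{p-1}(\frac{i}{p})\xi_p^i$, that is, $\sum_{i=1}^{p-1}\bigl(a_i-(\frac{i}{p})\bigr)\xi_p^i=0$, a $\mathbb{Q}$-linear relation among $\xi_p,\dots,\xi_p^{p-1}$.

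Next I would show that this relation forces every coefficient to vanish. The minimal polynomial of $\xi_p$ over $\mathbb{Q}$ is the $p$-th cyclotomic polynomial $\Phi_p(X)=1+X+\cdots+X^{p-1}$, of degree $p-1$, so $[\mathbb{Q}(\xi_p):\mathbb{Q}]=p-1$ and the only $\mathbb{Q}$-linear dependence among the $p$ powers $1,\xi_p,\dots,\xi_p^{p-1}$ is a scalar multiple of $\Phi_p(\xi_p)=0$. Since that relation has a nonzero constant term, the $p-1$ elements $\xi_p,\xi_p^2,\dots,\xi_p^{p-1}$ (obtained by dropping the term $1$) are linearly independent over $\mathbb{Q}$. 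Because $a_i-(\frac{i}{p})\in\mathbb{Q}$ for each $i$, linear independence yields $a_i=(\frac{i}{p})$ for $i=1,\dots,p-1$, which is the desired conclusion.

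The linear-independence step is routine; the genuine input is the precise value of the Gauss sum. Its modulus $|G|=\sqrt{p}$ (equivalently $G^2=(-1)^{(p-1)/2}p$) is elementary, but determining that the sign is $+$ rather than $-$ is exactly Gauss's sign theorem, and this is precisely the point on which the clean conclusion $a_i=(\frac{i}{p})$—as opposed to $a_i=-(\frac{i}{p})$—rests. I would therefore cite the standard sign determination of the quadratic Gauss sum (or, for a self-contained treatment, invoke one of its classical proofs, e.g. via the transformation formula for the theta function); this is the main obstacle and the only nontrivial ingredient.
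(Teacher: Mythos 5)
Your proposal is correct and follows essentially the same route as the paper's own (commented-out) proof: both identify $a_i=(\frac{i}{p})$ as a solution via the classical evaluation (including the sign) of the quadratic Gauss sum, and then obtain uniqueness from the fact that $\xi_p,\xi_p^2,\dots,\xi_p^{p-1}$ form a $\mathbb{Q}$-basis of $\mathbb{Q}(\xi_p)$ since $[\mathbb{Q}(\xi_p):\mathbb{Q}]=p-1$. Your additional emphasis that Gauss's sign theorem is the only nontrivial ingredient matches the paper's citation of the standard Gauss-sum evaluation for exactly this purpose.
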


\begin{lemma}[{\cite[Property 7, 8]{kumar1985generalized}}]\label{20230602lemmatwo}
  Assume that $p$ is odd. For any bent function $f:\mathbb{F}_q\rightarrow\mathbb{F}_p$, there exists a function $f^*:\mathbb{F}_q\rightarrow\mathbb{F}_p$, which is called the dual of $f$, such that for any $a\in\mathbb{F}_q$,
  \begin{equation}\label{20230603equationeleven}
    W_f(a)=\begin{cases}
      \pm\xi_p^{f^*(a)}\sqrt{q},   & \mbox{if }p\equiv 1\ ({\rm{mod}}\ 4), \\
      \pm i\xi_p^{f^*(a)}\sqrt{q}, & \mbox{if }p\equiv 3\ ({\rm{mod}}\ 4)
    \end{cases}
  \end{equation}
  if $m$ is odd and
  \begin{equation}\label{20230603equationfourteen}
    W_f(a)=\pm\xi_p^{f^*(a)}\sqrt{q}
  \end{equation}
  if $m$ is even. The sign $\pm 1$ in (\ref{20230603equationeleven}) or (\ref{20230603equationfourteen}) will be called the sign of $f$ at $a$.
\end{lemma}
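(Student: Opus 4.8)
The plan is to work in the ring of integers $\mathbb{Z}[\xi_p]$ of the cyclotomic field $K=\mathbb{Q}(\xi_p)$, exploiting the fact that $W_f(a)$ is an algebraic integer of known absolute value and pinning down its shape by combining ideal factorization with a Kronecker-type argument. First observe that $W_f(a)=\sum_{x\in\mathbb{F}_q}\xi_p^{f(x)+\mathrm{Tr}_{\mathbb{F}_q/\mathbb{F}_p}(ax)}\in\mathbb{Z}[\xi_p]$, and that bentness gives $W_f(a)\overline{W_f(a)}=|W_f(a)|^2=p^m$. Since $p$ is totally ramified in $K$ with $(p)=(\pi)^{p-1}$ for $\pi=1-\xi_p$, the unique prime ideal above $p$ is $(\pi)$, and it is fixed by complex conjugation because $\overline{\pi}=1-\xi_p^{-1}$ equals a unit times $\pi$. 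Hence $(W_f(a))=(\pi)^{v}$ for some $v$, and applying complex conjugation to $(W_f(a))\overline{(W_f(a))}=(\pi)^{m(p-1)}$ forces $v=\tfrac{m(p-1)}{2}$.

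Next I would introduce the quadratic Gauss sum $g=\sum_{t\in\mathbb{F}_p^*}\left(\frac{t}{p}\right)\xi_p^t$, which satisfies $g^2=p^*:=(-1)^{(p-1)/2}p$ and therefore $(g)=(\pi)^{(p-1)/2}$. Comparing ideals, $W_f(a)\big/\big(p^{(m-1)/2}g\big)$ when $m$ is odd, or $W_f(a)/p^{m/2}$ when $m$ is even, is an algebraic unit $u\in\mathbb{Z}[\xi_p]^{\times}$. A short computation gives $p^{(m-1)/2}g=\pm\sqrt{q}$ when $p\equiv1\ (\mathrm{mod}\ 4)$ and $p^{(m-1)/2}g=\pm i\sqrt{q}$ when $p\equiv3\ (\mathrm{mod}\ 4)$, the leading sign being harmless since it can be absorbed into $u$.

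The key step, and the one I expect to be the main obstacle, is to show that $u$ is a root of unity. For this I would observe that for each $t\in\mathbb{F}_p^*$ the automorphism $\sigma_t:\xi_p\mapsto\xi_p^t$ acts by
$$\sigma_t\big(W_f(a)\big)=\sum_{x\in\mathbb{F}_q}\xi_p^{tf(x)+\mathrm{Tr}_{\mathbb{F}_q/\mathbb{F}_p}(tax)}=W_{tf}(ta),$$
and that $tf$ is again bent, since scaling by a nonzero constant preserves the defining property: $|W_{tf}(b)|=|\sigma_t(W_f(b/t))|=|W_f(b/t)|=\sqrt q$. Hence every Galois conjugate of $W_f(a)$ has absolute value $\sqrt q$; as $|\sigma_t(g)|=\sqrt p$ and $\sigma_t(p^{(m-1)/2})=p^{(m-1)/2}$, it follows that $|\sigma_t(u)|=1$ for every embedding $\sigma_t$. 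By Kronecker's theorem, an algebraic integer all of whose conjugates lie on the unit circle is a root of unity, so $u$ is a root of unity in $K$.

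Finally, since $p$ is odd the roots of unity in $\mathbb{Q}(\xi_p)$ are exactly $\{\pm\xi_p^{c}:0\le c\le p-1\}$, so $u=\pm\xi_p^{c}$. Substituting back and setting $f^*(a)=c$ yields $W_f(a)=\pm\xi_p^{f^*(a)}\sqrt q$ (resp.\ $\pm i\,\xi_p^{f^*(a)}\sqrt q$) for $p\equiv1$ (resp.\ $p\equiv3$) modulo $4$ when $m$ is odd, and $W_f(a)=\pm\xi_p^{f^*(a)}\sqrt q$ when $m$ is even, which is exactly the claim; the map $f^*$ is well defined because $c$ is determined by $a$. The delicate points to handle carefully are the ideal-theoretic bookkeeping in the first two paragraphs and the verification that the conjugates $\sigma_t(W_f(a))$ are themselves Walsh transforms of bent functions, which is precisely what licenses the Kronecker argument.
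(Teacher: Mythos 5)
The paper offers no proof of this lemma: it is imported verbatim from Kumar--Scholtz--Welch (the cited Properties 7 and 8), so there is no internal argument to compare yours against. Your proof is correct, and it is essentially the standard (and, in substance, the original) argument: from $|W_f(a)|^2=q$ and the total ramification $(p)=(1-\xi_p)^{p-1}$, with the unique prime above $p$ fixed by complex conjugation, the ideal $(W_f(a))$ must equal $(1-\xi_p)^{m(p-1)/2}$, which is the ideal of $p^{(m-1)/2}g$ ($m$ odd) or of $p^{m/2}$ ($m$ even); the quotient is an algebraic unit all of whose conjugates lie on the unit circle, hence a root of unity by Kronecker, and for odd $p$ the roots of unity in $\mathbb{Q}(\xi_p)$ are exactly $\pm\xi_p^c$. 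Two small steps deserve to be made explicit. First, in the chain $|W_{tf}(b)|=|\sigma_t(W_f(b/t))|=|W_f(b/t)|$ the middle equality is not automatic for Galois conjugates; it holds because complex conjugation lies in the abelian group $\mathrm{Gal}(\mathbb{Q}(\xi_p)/\mathbb{Q})$ and so commutes with $\sigma_t$, giving $|\sigma_t(z)|^2=\sigma_t(z\bar z)=\sigma_t(q)=q$. (You can also bypass Galois theory here: $D_a(tf)=t\,D_af$ is balanced for $t\ne 0$, so $tf$ is perfect nonlinear, hence bent, and then $|W_{tf}(ta)|=\sqrt q$ directly.) Second, the well-definedness of $f^*(a)$ and of the sign uses that $p$ is odd: if $\pm\xi_p^c=\pm\xi_p^{c'}$ then $\xi_p^{c-c'}=\pm 1$, and $-1$ is excluded because $\xi_p^{c-c'}$ has odd order. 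With those two remarks filled in, your argument is a complete and self-contained proof of a statement the paper itself only cites.
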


\indent We provide a sketch of the proofs for Theorem \ref{20230602themfour} and Theorem \ref{20230602themtwo} here, which will be used in the proofs of Corollary \ref{20230603corolone} and Theorem \ref{202363theoremsix}. Indeed, the value distribution of a bent function $f:\mathbb{F}_q\rightarrow\mathbb{F}_p$ is related to its Walsh spectrum via the following equation:
$$
  W_f(0)  =\sum\limits_{x\in\mathbb{F}_q}\xi_p^{f(x)}=\sum\limits_{i=0}^{p-1}n_i\xi_p^i,
$$
where $n_i=|f^{-1}(i)|$ for any $0\le i\le p-1$. If $m$ is even, then by Lemma \ref{20230602lemmatwo}, we have
$$W_f(0)=\epsilon\xi_p^s\sqrt{q},$$
where $\epsilon\in\{\pm 1\}$ is the sign of $f$ at $0$ and $0\le s\le p-1$. By employing some basic properties of cyclotomic fields, we can show that
$$n_s=p^{m-1}+\epsilon(p-1)p^{\frac{m}{2}-1}$$
and
$$n_i=n_s-\epsilon p^{\frac{m}{2}}=p^{m-1}-\epsilon p^{\frac{m}{2}-1}$$
for any $i\ne s$. In particular, the value distribution of $f$ is of the form (\ref{20230603equationnine}) (resp., (\ref{20230603equationten})) if and only if the sign of $f$ at $0$ is $+1$ (resp., $-1$). Theorem \ref{20230602themtwo} can be proved similarly, with the additional assistance of Lemma \ref{20230602lemmathree}.

\indent For the purpose of later exposition of the results, we introduce the definition of weakly regular bentness.

\begin{definition}\label{20230604defone}
  Let $p$ be an odd prime number and let $f:\mathbb{F}_q\rightarrow\mathbb{F}_p$ be a bent function. If the signs $\pm 1$ in (\ref{20230603equationeleven}) or (\ref{20230603equationfourteen}) are the same for all $a\in\mathbb{F}_q$, then $f$ is called weakly regular bent. In this case, the common sign will be called the sign of $f$. If $W_f(a)=\xi_p^{f^*(a)}\sqrt{q}$ for any $a\in\mathbb{F}_q$, where $f^*:\mathbb{F}_q\rightarrow\mathbb{F}_p$ is the dual of $f$, then $f$ is called regular bent.\\
  \indent Let $\Pi:\mathbb{F}_q\rightarrow\mathbb{F}_q$ be a perfect nonlinear function. We say that $\Pi$ is weakly regular perfect nonlinear if the functions $\Pi_a:\mathbb{F}_q\rightarrow\mathbb{F}_p$ given by $x\mapsto{\rm{Tr}}_{\mathbb{F}_q/\mathbb{F}_p}(a\Pi(x))$ are weakly regular bent for all $a\in\mathbb{F}_q^*$.
\end{definition}

\begin{corollary}\label{20230603corolone}
  Assume that $m$ is even and let $f:\mathbb{F}_q\rightarrow\mathbb{F}_p$ be a weakly regular bent function. If the value distribution of $f$ is of the form (\ref{20230603equationnine}) (resp., (\ref{20230603equationten})), then so is that of $f_b$ for any $b\in\mathbb{F}_q$, where
  $$f_{b}(x)=f(x)+{\rm{Tr}}_{\mathbb{F}_{q}/\mathbb{F}_p}(bx).$$
\end{corollary}

\begin{proof}
  As mentioned above, the form of the value distribution of a bent function from $\mathbb{F}_q$ to $\mathbb{F}_p$ depends only on the sign of the function at $0$.  Since $f$ is weakly regular bent, the signs of $f$ at $0$ and $b$ are the same. Moreover, we have
  $$
    W_{f}(b)=\sum\limits_{x\in\mathbb{F}_q}\xi_p^{f(x)+{\rm{Tr}}_{\mathbb{F}_q/\mathbb{F}_p}(bx)}=W_{f_b}(0),
  $$
  which implies that the sign of $f_b$ at $0$ is equal to the sign of $f$ at $b$. Therefore, $f_b$ and $f$ have the same sign at $0$.
\end{proof}

\section{The Weight Distributions of $C_{\Pi}$ and $\overline{C_{\Pi}}$}
\indent Throughout this section, let $p$ be an odd prime number, let $q=p^m$ with $m\in\mathbb{N}_+$ and let $\Pi:\mathbb{F}_q\rightarrow\mathbb{F}_q$ be a perfect nonlinear function. For any $0\le i\le q-1$, let $A_i$ denote the number of codewords of Hamming weight $i$ in $C_{\Pi}$ and for any $0\le i\le q$, let $\overline{A}_i$ denote the number of codewords of Hamming weight $i$ in $\overline{C_{\Pi}}$.\\
\indent We first consider the linear code $\overline{C_{\Pi}}$, since we can say more about its codewords than just determining the weight distribution. For any $c_{a,b,c}\in\overline{C_{\Pi}}$, we say that it is a $(n_0,\cdots,n_{p-1})$-codeword if the value distribution of $f_{a,b,c}$ is $(n_0,\cdots,n_{p-1})$.\\
\indent If $a=0$ and $b=0$, then $f_{0,0,c}(x)={\rm{Tr}}_{\mathbb{F}_q/\mathbb{F}_p}(c)$ for any $x\in\mathbb{F}_q$. For any $0\le i\le p-1$, if ${\rm{Tr}}_{\mathbb{F}_q/\mathbb{F}_p}(c)=i$, then $c_{0,0,c}=(i,\cdots,i)$, i.e., $c_{0,0,c}$ is a $(0,\cdots,0,p^m,0,\cdots,0)$-codeword, where $p^m$ occurs in the $i$-th entry. There is only one codeword of each of these $p$ types in $\overline{C_{\Pi}}$.\\
\indent If $a=0$ but $b\ne 0$, then for any $c\in\mathbb{F}_{q}$, since $x\mapsto bx+c$ is a permutation of $\mathbb{F}_q$, $c_{0,b,c}$ is a $(p^{m-1},\cdots,p^{m-1})$-codeword. There are $p(q-1)=p^{m+1}-p$ such codewords.\\
\indent If $a\ne 0$, then for any $b$, $c\in\mathbb{F}_q$, $f_{a,b,c}$ is a perfect nonlinear function from $\mathbb{F}_q$ to $\mathbb{F}_p$. Put
$$\overline{\Omega}=\{c_{a,b,c}\in\overline{C_{\Pi}}:\ a\ne 0\}.$$
Since for any $a$, $a'$, $b$, $b'$, $c$, $c'\in\mathbb{F}_q$, $c_{a,b,c}=c_{a',b',c'}$ if and only if $a=a'$, $b=b'$ and ${\rm{Tr}}_{\mathbb{F}_q/\mathbb{F}_p}(c-c')=0$, the subset $\overline{\Omega}$ is well-defined, i.e., the condition that $a\ne 0$ does not depend on the way in which the codewords in $\overline{C_{\Pi}}$ are represented as $c_{a,b,c}$. It is easy to see that there are $pq(q-1)=p^{m+1}(p^m-1)$ codewords in $\overline{\Omega}$.\\
\indent Consider the group $G:=\mathbb{F}_p^{*}\times\mathbb{F}_p$, whose multiplication is given by
$$(\alpha_1,\beta_1)\cdot(\alpha_2,\beta_2)=(\alpha_1\alpha_2,\alpha_1\beta_2+\beta_1).$$
The identity element of $G$ is $(1,0)$. For any $\alpha\in\mathbb{F}_p^*$, $\beta\in\mathbb{F}_p$ and $a$, $b$, $c\in\mathbb{F}_q$, we have
\begin{align*}
  \alpha f_{a,b,c}(x)+\beta & =\alpha{\rm{Tr}}_{\mathbb{F}_q/\mathbb{F}_p}(a\Pi(x)+bx+c)+\beta                 \\
                            & ={\rm{Tr}}_{\mathbb{F}_q/\mathbb{F}_p}(\alpha a\Pi(x)+\alpha bx+\alpha c+\beta')
\end{align*}
for any $x\in\mathbb{F}_q$, i.e., $\alpha f_{a,b,c}+\beta=f_{\alpha a,\alpha b,\alpha c+\beta'}$, where $\beta'\in\mathbb{F}_q$ is such that ${\rm{Tr}}_{\mathbb{F}_q/\mathbb{F}_p}(\beta')=\beta$. Therefore, we can define a $G$-action on $\overline{\Omega}$ by
$$(\alpha,\beta)\cdot c_{a,b,c}=c_{\alpha a,\alpha b,\alpha c+\beta'}.$$
(Note that this definition does not depend on the choice of $\beta'$.)

\begin{lemma}
  The $G$-action on $\overline{\Omega}$ is free; that is, if $(\alpha,\beta)\cdot c_{a,b,c}=c_{a,b,c}$ for some $(\alpha,\beta)\in G$ and $c_{a,b,c}\in\overline{\Omega}$, then $(\alpha,\beta)$ must be the identity element of $G$, i.e., $(1,0)$.
\end{lemma}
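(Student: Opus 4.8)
The plan is to unravel the definition of the $G$-action and then apply the explicit criterion for when two codewords in $\overline{C_{\Pi}}$ coincide, namely that $c_{a,b,c}=c_{a',b',c'}$ if and only if $a=a'$, $b=b'$ and ${\rm{Tr}}_{\mathbb{F}_q/\mathbb{F}_p}(c-c')=0$. Writing $(\alpha,\beta)\cdot c_{a,b,c}=c_{\alpha a,\alpha b,\alpha c+\beta'}$ with ${\rm{Tr}}_{\mathbb{F}_q/\mathbb{F}_p}(\beta')=\beta$, the hypothesis $(\alpha,\beta)\cdot c_{a,b,c}=c_{a,b,c}$ translates into the three conditions $\alpha a=a$, $\alpha b=b$, and ${\rm{Tr}}_{\mathbb{F}_q/\mathbb{F}_p}(\alpha c+\beta'-c)=0$.

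First I would exploit the defining property of $\overline{\Omega}$, i.e.\ that $a\neq 0$. The first condition reads $(\alpha-1)a=0$ in the field $\mathbb{F}_q$, and since $\mathbb{F}_q$ is an integral domain and $a\neq 0$, this forces $\alpha=1$. (Note that the second condition $\alpha b=b$ is then automatically satisfied and carries no information; this is exactly why only $a\neq 0$, and not $b\neq 0$, is needed.)

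Having pinned down $\alpha=1$, I would substitute into the third condition, which collapses to ${\rm{Tr}}_{\mathbb{F}_q/\mathbb{F}_p}(\beta')=0$. Since $\beta={\rm{Tr}}_{\mathbb{F}_q/\mathbb{F}_p}(\beta')$ by the definition of $\beta'$, this yields $\beta=0$. Thus $(\alpha,\beta)=(1,0)$ is the identity of $G$, which is precisely the assertion that the action is free.

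There is no serious obstacle here: the argument is a direct computation once the equality criterion for codewords is invoked. The only point that deserves care is to check that the conclusion is independent of the chosen lift $\beta'$ of $\beta$—but this is guaranteed because the criterion only sees ${\rm{Tr}}_{\mathbb{F}_q/\mathbb{F}_p}$ of the third coordinate, and any two lifts of $\beta$ differ by an element of trace zero. Hence the whole verification reduces to the cancellation $\alpha=1$ enabled by $a\neq 0$, followed by the trace computation forcing $\beta=0$.
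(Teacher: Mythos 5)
Your proof is correct and follows exactly the paper's own argument: apply the equality criterion for codewords to get $\alpha a=a$, $\alpha b=b$, ${\rm{Tr}}_{\mathbb{F}_q/\mathbb{F}_p}(\alpha c+\beta'-c)=0$, then use $a\ne 0$ to force $\alpha=1$ and the trace condition to force $\beta=0$. The additional remarks on the redundancy of the condition $\alpha b=b$ and the independence from the choice of lift $\beta'$ are sound but not needed.
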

\begin{proof}
  Assume that $(\alpha,\beta)\in G$ fixes $c_{a,b,c}\in\overline{\Omega}$, i.e.,
  $$(\alpha,\beta)\cdot c_{a,b,c}=c_{\alpha a,\alpha b,\alpha c+\beta'}=c_{a,b,c}.$$
  Then
  $$\alpha a=a,\ \alpha b=b,\ {\rm{Tr}}_{\mathbb{F}_{q}/\mathbb{F}_p}(\alpha c+\beta'-c)=0.$$
  Since $a\ne 0$, we must have $\alpha=1$ and thus
  $$\beta={\rm{Tr}}_{\mathbb{F}_{q}/\mathbb{F}_p}(\beta')=0.$$
  Therefore, $G$ acts freely on $\overline{\Omega}$.
\end{proof}

\indent In particular, there are $p^m(p^m-1)/(p-1)$ orbits of $\overline{\Omega}$ under the $G$-action and each orbit contains $(p-1)p$ codewords. The essence of proving Theorem \ref{202363theoremsix} lies in a careful examination of how the types of codewords in $\overline{\Omega}$ evolve under the $G$-action. It can be easily seen that if $c_{a,b,c}$ is a $(n_0,\cdots,n_{p-1})$-codeword, then for any $(\alpha,\beta)\in G$, $(\alpha,\beta)\cdot c_{a,b,c}$ is a $(n_0',\cdots,n_{p-1}')$-codeword, where $n_i'=n_{\alpha^{-1}(i-\beta)}$ for any $0\le i\le p-1$.\\
\indent For the sake of convenient exposition, we introduce the following definition.

\begin{definition}
  Assume that $m$ is odd. A codeword $c_{a,b,c}\in\overline{\Omega}$ is called an $s^+$-codeword (resp., $s^-$-codeword) with $0\le s\le p-1$ if it is a $(n_0,\cdots,n_{p-1})$-codeword, where
  \begin{align*}
    n_i=p^{m-1}+(\frac{i+s}{p})p^{\frac{m-1}{2}}\ \mbox{for any}\ 0\le i\le p-1 \\
    (\mbox{resp., }n_i=p^{m-1}-(\frac{i+s}{p})p^{\frac{m-1}{2}}\ \mbox{for any}\ 0\le i\le p-1).
  \end{align*}
  \indent Assume that $m$ is even. A codeword $c_{a,b,c}\in\overline{\Omega}$ is called an $s^+$-codeword (resp., $s^-$-codeword) with $0\le s\le p-1$ if it is a $(n_0,\cdots,n_{p-1})$-codeword, where
  \begin{align*}
    n_i=\begin{cases}
          p^{m-1}+(p-1)p^{\frac{m}{2}-1}, & \mbox{if }i=s,    \\
          p^{m-1}-p^{\frac{m}{2}-1},      & \mbox{if }i\ne s,
        \end{cases} \\
    \Big(\mbox{resp., }n_i=\begin{cases}
                             p^{m-1}-(p-1)p^{\frac{m}{2}-1}, & \mbox{if }i=s,    \\
                             p^{m-1}+p^{\frac{m}{2}-1},      & \mbox{if }i\ne s,
                           \end{cases}\Big).
  \end{align*}
  \indent If $c_{a,b,c}\in\overline{\Omega}$ is an $s^+$-codeword (resp., $s^-$-codeword) for some $0\le s\le p-1$, then $c_{a,b,c}$ is called a positive (resp., negative) codeword and is called a strictly positive (resp., negative) codeword if furthermore $s\ne 0$.
\end{definition}

\indent By Theorem \ref{20230602themfour} and Theorem \ref{20230602themtwo}, any codeword $c_{a,b,c}\in\overline{\Omega}$ is an $s^+$-codeword or an $s^-$-codeword for some $0\le s\le p-1$. We can now determine the number of codewords in $\overline{C_{\Pi}}$ of each type, as follows.

\begin{theorem}\label{202363theoremsix}
  Let $\Pi:\mathbb{F}_q\rightarrow\mathbb{F}_q$ be a  perfect nonlinear function.
  \begin{enumerate}
    \item If $m$ is odd, then the possible types of codewords in $\overline{C_{\Pi}}$ and the corresponding numbers of codewords are listed in Table I. As a consequence, $\overline{A_i}=0$ except for the values
          \begin{align*}
             & \overline{A}_0=1,                                                         \\
             & \overline{A}_{(p-1)p^{m-1}-p^{\frac{m-1}{2}}}=\frac{1}{2}(p-1)p^m(p^m-1), \\
             & \overline{A}_{(p-1)p^{m-1}}=(p^m-1)(p^m+p),                               \\
             & \overline{A}_{(p-1)p^{m-1}+p^{\frac{m-1}{2}}}=\frac{1}{2}(p-1)p^m(p^m-1), \\
             & \overline{A}_{p^m}=p-1.
          \end{align*}
          In particular, the minimum distance of $\overline{C_{\Pi}}$ is $(p-1)p^{m-1}-p^{\frac{m-1}{2}}$ and $\overline{C_{\Pi}}$ is a $4$-weight code.
    \item If $m$ is even, then the possible types of codewords in $\overline{C_{\Pi}}$ and the corresponding numbers of codewords are listed in Table I, assuming that one of the following conditions holds:
          \begin{enumerate}
            \item $p=3$,
            \item $|\Pi^{-1}(0)|=1$, $\Pi$ is weakly regular perfect nonlinear (see Definition \ref{20230604defone}) and the map $\mathbb{F}_q^*\rightarrow\mathbb{F}_p$ given by  $a\mapsto\Pi_a^*(0)$ is not surjective, where $\Pi_a=f_{a,0,0}$ and $\Pi_a^*$ is the dual of $\Pi_a$ (see Lemma \ref{20230602lemmatwo}).
          \end{enumerate}
          As a consequence, $\overline{A_i}=0$ except for the values
          \begin{align*}
             & \overline{A}_0=1,                                                        \\
             & \overline{A}_{(p-1)(p^{m-1}-p^{\frac{m}{2}-1})}=\frac{p^m(p^m-1)}{2},    \\
             & \overline{A}_{(p-1)p^{m-1}-p^{\frac{m}{2}-1}}=\frac{p^m(p^m-1)(p-1)}{2}, \\
             & A_{(p-1)p^{m-1}}=p^{m+1}-p,                                              \\
             & \overline{A}_{(p-1)p^{m-1}+p^{\frac{m}{2}-1}}=\frac{p^m(p^m-1)(p-1)}{2}, \\
             & \overline{A}_{(p-1)(p^{m-1}+p^{\frac{m}{2}-1})}=\frac{p^m(p^m-1)}{2},    \\
             & \overline{A}_{p^m}=p-1.
          \end{align*}
          In particular, the minimum distance of $\overline{C_{\Pi}}$ is $(p-1)(p^{m-1}-p^{\frac{m}{2}-1})$ and $\overline{C_{\Pi}}$ is a $6$-weight code.
  \end{enumerate}
  \begin{table}[!ht]\label{20230531tabletwo}
    \centering
    \caption{}
    \renewcommand\arraystretch{2.5}
    \begin{tabular}{cc}
      \toprule 
      Type                              & Number                 \\
      \midrule 
      $\substack{(0,\cdots,0,p^m,0,\cdots,0)                     \\(\forall\ 0\le i\le p-1)}$ & $1$                    \\\hline
      $(p^{m-1},\cdots,p^{m-1})$        & $p^{m+1}-p$
      \\\hline
      $s^+$\ ($\forall\ 0\le s\le p-1$) & $\frac{p^m}{2}(p^m-1)$
      \\\hline
      $s^-$\ ($\forall\ 0\le s\le p-1$) & $\frac{p^m}{2}(p^m-1)$ \\
      \bottomrule 
    \end{tabular}
  \end{table}
\end{theorem}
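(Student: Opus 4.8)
The plan is to determine the type counts in $\overline{\Omega}$ by exploiting the free $G$-action, which partitions $\overline{\Omega}$ into exactly $p^m(p^m-1)/(p-1)$ orbits, each of size $p(p-1)$. By Theorem \ref{20230602themfour} and Theorem \ref{20230602themtwo}, every $c_{a,b,c}\in\overline{\Omega}$ is an $s^+$- or $s^-$-codeword, so it suffices to understand how the $2p$ types $\{s^+,s^-:0\le s\le p-1\}$ are permuted along each orbit; the engine is the rule $n_i'=n_{\alpha^{-1}(i-\beta)}$ recorded before the statement. I first note that the sign (positive/negative) of a codeword is easy to read off: since $W_{f_{a,b,c}}(0)=\xi_p^{\mathrm{Tr}_{\mathbb{F}_q/\mathbb{F}_p}(c)}W_{\Pi_a}(b)$, the sign of $f_{a,b,c}$ at $0$ equals the sign of $\Pi_a$ at $b$ in the sense of Lemma \ref{20230602lemmatwo}, and by the sketch following that lemma this sign decides whether $c_{a,b,c}$ is positive or negative.

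\emph{Odd $m$.} Here I would compute the induced action on types directly. Writing $\alpha^{-1}(i-\beta)+s=\alpha^{-1}(i+(\alpha s-\beta))$ and using $(\frac{\alpha^{-1}}{p})=(\frac{\alpha}{p})$, multiplicativity of the Legendre symbol gives
$$(\frac{\alpha^{-1}(i-\beta)+s}{p})=(\frac{\alpha}{p})(\frac{i+s'}{p}),\quad s'=\alpha s-\beta.$$
Thus $(\alpha,\beta)$ sends $s^+$ to $(s')^{+}$ if $\alpha$ is a quadratic residue and to $(s')^{-}$ otherwise, and symmetrically for $s^-$. Since $s'=\alpha s-\beta$ runs over all of $\mathbb{F}_p$ as $\beta$ varies and exactly half of the $\alpha\in\mathbb{F}_p^*$ are residues, each of the $2p$ types occurs exactly $(p-1)/2$ times in every orbit. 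Multiplying by the number of orbits yields $\frac{p^m(p^m-1)}{2}$ codewords of each type $s^{\pm}$, which is Table I; translating the $n_s$ into weights $q-n_0$ and folding in the already-treated $a=0$ codewords then gives the five listed values of $\overline{A}_i$ and the minimum distance.

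\emph{Even $m$.} Now the same computation gives $s^{\pm}\mapsto(\alpha s+\beta)^{\pm}$: the sign is a $G$-invariant, so each orbit is entirely positive or entirely negative, and within it each $s$-value of the relevant sign occurs $p-1$ times. Consequently the counts are pinned down once I know that the number $O^+$ of positive orbits equals the number $O^-$ of negative orbits, equivalently that
$$S:=\sum_{a\in\mathbb{F}_q^*}(N_a^+-N_a^-)=0,$$
where $N_a^{\pm}$ is the number of $b\in\mathbb{F}_q$ at which the sign $\epsilon_b^{(a)}$ of $\Pi_a$ equals $\pm$. I expect this balancing to be the crux: a quick check shows that the $p^{\frac{m}{2}}$-terms cancel in the first two power moments of the weight enumerator, so the moment method cannot detect $S$, and some genuine hypothesis on $\Pi$ is unavoidable.

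Under the second hypothesis (weak regularity), each $\Pi_a$ has one common sign $\epsilon_a$, so $N_a^+-N_a^-=\epsilon_a p^m$ and $S=p^m\sum_{a\ne0}\epsilon_a$. The identity
$$\sum_{a\in\mathbb{F}_q^*}\epsilon_a\xi_p^{\Pi_a^*(0)}=\frac{1}{\sqrt{q}}\sum_{a\in\mathbb{F}_q^*}\sum_{x\in\mathbb{F}_q}\xi_p^{\mathrm{Tr}_{\mathbb{F}_q/\mathbb{F}_p}(a\Pi(x))}=\sqrt{q}\,(|\Pi^{-1}(0)|-1)=0$$
(using $|\Pi^{-1}(0)|=1$), combined with the fact that the only $\mathbb{Q}$-linear relation among $1,\xi_p,\dots,\xi_p^{p-1}$ is $\sum_{j=0}^{p-1}\xi_p^j=0$, forces the integers $e_j:=\sum_{\Pi_a^*(0)=j}\epsilon_a$ to be all equal; non-surjectivity of $a\mapsto\Pi_a^*(0)$ makes some $e_{j_0}=0$, hence all $e_j=0$ and $S=0$. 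For the first hypothesis, $p=3$, the elementary character sums above only ever reproduce the single relation $\sum_b\epsilon_b^{(a)}\xi_3^{\Pi_a^*(b)}=\sqrt{q}\,\xi_3^{\Pi_a(0)}$, which does not determine $N_a^+-N_a^-$ on its own; I would instead invoke the complete classification of the value distributions of perfect nonlinear functions $\mathbb{F}_{3^m}\to\mathbb{F}_3$ obtained via integer representations of $3^{m-1}$ by $X^2+XY+Y^2$ (as in \cite{li2008properties}) to conclude $S=0$. This $p=3$ balancing is the main obstacle. Once $S=0$ is in hand, $O^+=O^-=\frac{p^m(p^m-1)}{2(p-1)}$ gives Table I, and converting the six type-weights $q-n_0$ together with the $a=0$ codewords produces the seven values of $\overline{A}_i$ and the stated minimum distance.
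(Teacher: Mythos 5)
Your odd-$m$ argument and your even-$m$ argument under hypothesis b) are correct and essentially identical to the paper's proof: the same computation of the induced $G$-action on types (the Legendre-symbol twist $s^{\pm}\mapsto(\alpha s-\beta)^{\pm/\mp}$ for odd $m$, sign-invariance and $s^{\pm}\mapsto(\alpha s+\beta)^{\pm}$ for even $m$), the same orbit counting, and, for b), the same identity $\sum_{a\in\mathbb{F}_q^*}\epsilon_a\xi_p^{\Pi_a^*(0)}=\sqrt{q}\,(|\Pi^{-1}(0)|-1)=0$ combined with the minimal polynomial of $\xi_p$ and non-surjectivity of $a\mapsto\Pi_a^*(0)$ to force $\sum_{a}\epsilon_a=0$.

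The genuine gap is case a), $p=3$, which you yourself flag as the main obstacle but then resolve with an argument that cannot work. The classification of value distributions of perfect nonlinear functions $\mathbb{F}_{3^m}\rightarrow\mathbb{F}_3$ via representations of $3^{m-1}$ by $X^2+XY+Y^2$ is a statement about each \emph{individual} function $f_{a,b,c}$; for even $m$ it reproduces exactly the dichotomy already given by Theorem \ref{20230602themfour}, namely that every codeword of $\overline{\Omega}$ is an $s^+$- or $s^-$-codeword. It carries no information about how these two types are distributed \emph{across the family} $\{f_{a,b,c}\}$, which is what $k_+=k_-$ asserts: an all-positive assignment and an all-negative assignment are both consistent with the classification, so no single-function result can decide the balance. (Your own computation confirms this: simple character-sum identities leave $S$ undetermined.) The paper closes this gap with a family-level input you are missing: by \cite[Theorem 7]{carlet2005linear}, the dual code $\overline{C_{\Pi}}^{\perp}$ has minimum distance $5$, hence $B_1=B_2=B_3=B_4=0$ and the first four Pless power moments of $\overline{C_{\Pi}}$ are explicitly computable; they yield four linear equations in the four unknown frequencies $\overline{A}_{2(3^{m-1}\pm 3^{\frac{m}{2}-1})}$ and $\overline{A}_{2\cdot 3^{m-1}\pm 3^{\frac{m}{2}-1}}$ whose coefficient matrix is Vandermonde, so the solution is unique and gives $k_+=k_-=\frac{3^m(3^m-1)}{4}$. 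In short, the remedy for your correct observation that low-order moments cannot detect $S$ is to use \emph{more} moments, which the dual-distance bound licenses, rather than to retreat to the value-distribution classification.
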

\begin{proof}
  Assume first that $m$ is odd. Let $(\alpha,\beta)\in G$ and let $c_{a,b,c}\in\overline{\Omega}$. If $c_{a,b,c}$ is an $s^+$-codeword, then $(\alpha,\beta)\cdot c_{a,b,c}$ is a $(\alpha s-\beta)^+$-codeword if $(\frac{\alpha^{-1}}{p})=1$ and is a $(\alpha s-\beta)^-$-codeword if $(\frac{\alpha^{-1}}{p})=-1$. Similarly, if $c_{a,b,c}$ is an $s^-$-codeword, then $(\alpha,\beta)\cdot c_{a,b,c}$ is a $(\alpha s-\beta)^-$-codeword if $(\frac{\alpha^{-1}}{p})=1$ and is a $(\alpha s-\beta)^+$-codeword if $(\frac{\alpha^{-1}}{p})=-1$.\ In particular, every orbit of $\overline{\Omega}$ contains a $0^+$-codeword.\\
  \indent Let $O$ be an orbit of $\overline{\Omega}$ and let $c_{a,b,c}$ be a $0^+$-codeword in $O$. Then for any $(\alpha,\beta)\in G$, $(\alpha,\beta)\cdot c_{a,b,c}$ is a $(-\beta)^+$-codeword if $(\frac{\alpha^{-1}}{p})=1$ and is a $(-\beta)^-$-codeword if $(\frac{\alpha^{-1}}{p})=-1$. In $\mathbb{F}_p^*$, there are $(p-1)/2$ elements $\alpha$ such that $(\frac{\alpha^{-1}}{p})=1$ and $(p-1)/2$ elements $\alpha$ such that $(\frac{\alpha^{-1}}{p})=-1$. Hence, in $O$, there are $(p-1)/2$ $s^+$-codewords and $(p-1)/2$ $s^-$-codewords for any $0\le s\le p-1$.\ This proves the assertion on the possible types of codewords in $\overline{C_{\Pi}}$.\\
  \indent To determine the weight distribution of $\overline{C_{\Pi}}$,\ we only consider the codewords in $\overline{\Omega}$, since the Hamming weights of the other codewords are easy to determine. It is clear that a codeword in $\overline{\Omega}$ has Hamming weight $(p-1)p^{m-1}$ if and only if it is a $0^+$-codeword or a $0^-$-codeword. Let $1\le s\le p-1$. If $(\frac{s}{p})=1$, then any $s^+$-codeword has Hamming weight $(p-1)p^{m-1}-p^{\frac{m-1}{2}}$ and any $s^-$-codeword has Hamming weight $(p-1)p^{m-1}+p^{\frac{m-1}{2}}$. If $(\frac{s}{p})=-1$, the situation is reversed. Hence, in $\overline{\Omega}$ (and thus in $\overline{C_{\Pi}}$), the codewords of Hamming weight $(p-1)p^{m-1}-p^{\frac{m-1}{2}}$ are equal in number to the codewords of Hamming weight $(p-1)p^{m-1}+p^{\frac{m-1}{2}}$. This proves the assertion on the weight distribution of $\overline{C_{\Pi}}$.\\
  \indent Now assume that $m$ is even. Let $(\alpha,\beta)\in G$ and let $c_{a,b,c}\in\overline{\Omega}$. If $c_{a,b,c}$ is an $s^+$-codeword, then $(\alpha,\beta)\cdot c_{a,b,c}$ is a $(\alpha s+\beta)^+$-codeword; in particular, the orbit containing $c_{a,b,c}$ contains a $0^+$-codeword. If $c_{a,b,c}$ is an $s^-$-codeword, then $(\alpha,\beta)\cdot c_{a,b,c}$ is a $(\alpha s+\beta)^-$-codeword; in particular, the orbit containing $c_{a,b,c}$ contains a $0^-$-codeword.\ Let $O$ be an orbit of $\overline{\Omega}$. The above discussions imply that either all the codewords in $O$ are positive or all the codewords in $O$ are negative. An orbit of the former type will be called a positive orbit, while an orbit of the latter type will be called a negative orbit.\\
  \indent If $O$ is a positive orbit of $\overline{\Omega}$ and $c_{a,b,c}\in O$ is a $0^+$-codeword, then for any $(\alpha,\beta)\in G$, $(\alpha,\beta)\cdot c_{a,b,c}$ is a $\beta^+$-codeword. In particular, for any $0\le s\le p-1$, there are $p-1$ $s^+$-codewords in $O$.\ If $c_{a,b,c}$ is an $s^+$-codeword, then
  $$
    w_H(c_{a,b,c})=\begin{cases}
      (p-1)(p^{m-1}-p^{\frac{m}{2}-1}), & \text{if $s=0$,}    \\
      (p-1)p^{m-1}+p^{\frac{m}{2}-1},   & \text{if $s\ne 0$.}
    \end{cases}
  $$
  Hence, there are $p-1$ codewords of Hamming weight $(p-1)(p^{m-1}-p^{\frac{m}{2}-1})$ and $(p-1)^2$ codewords of Hamming weight $(p-1)p^{m-1}+p^{\frac{m}{2}-1}$ in $O$. Similarly, if $O$ is a negative orbit of $\overline{\Omega}$ and $c_{a,b,c}\in O$ is a $0^-$-codeword, then for any $0\le s\le p-1$, there are $p-1$ $s^-$-codewords in $O$. Moreover, there are $p-1$ codewords of Hamming weight $(p-1)(p^{m-1}+p^{\frac{m}{2}-1})$ and $(p-1)^2$ codewords of Hamming weight $(p-1)p^{m-1}-p^{\frac{m}{2}-1}$ in $O$.\\
  \indent Assume that there are $k_+$ positive orbits and $k_-$ negative orbits in $\overline{\Omega}$. Then
  \begin{equation}\label{20230602equationtwo}
    k_++k_-=\frac{p^m(p^m-1)}{p-1}.
  \end{equation}
  To determine $k_+$ and $k_-$, we need to consider the weight distribution of $\overline{C_{\Pi}}$ first. By the above discussions, we know that $\overline{A}_i=0$ for $i\ne 0$, $(p-1)p^{m-1}$, $p^m$, $(p-1)p^{m-1}\pm p^{\frac{m}{2}-1}$ and $(p-1)(p^{m-1}\pm p^{\frac{m}{2}-1})$.\ Moreover, we have
  \begin{align*}
     & \overline{A}_0=1,                                         \\
     & \overline{A}_{(p-1)(p^{m-1}-p^{\frac{m}{2}-1})}=(p-1)k_+, \\
     & \overline{A}_{(p-1)p^{m-1}-p^{\frac{m}{2}-1}}=(p-1)^2k_-, \\
     & \overline{A}_{(p-1)p^{m-1}}=p^{m+1}-p,                    \\
     & \overline{A}_{(p-1)p^{m-1}+p^{\frac{m}{2}-1}}=(p-1)^2k_+, \\
     & \overline{A}_{(p-1)(p^{m-1}+p^{\frac{m}{2}-1})}=(p-1)k_-, \\
     & \overline{A}_{p^m}=p-1.
  \end{align*}
  \indent If $p=3$, then by \cite[Theorem 7]{carlet2005linear}, the minimum distance of the dual code $\overline{C_{\Pi}}^{\perp}$ of $\overline{C_{\Pi}}$ is $5$. By calculating the first four Pless power moments (see \cite[p.90]{pless1998handbook}), we have
  \begin{align*}
     & \sum\limits_{j=0}^{3^m}\overline{A}_j=3^{2m+1},                                           \\
     & \sum\limits_{j=0}^{3^m}j\overline{A}_j=2\cdot 3^{3m},                                     \\
     & \sum\limits_{j=0}^{3^m}j^2\overline{A}_j=2\cdot 3^{3m-1}(2\cdot 3^m+1),                   \\
     & \sum\limits_{j=0}^{3^m}j^3\overline{A}_j=2\cdot 3^{3m-2}(4\cdot 3^{2m}+2\cdot 3^{m+1}-1).
  \end{align*}
  This is a system of linear equations with variables
  $$\overline{A}_{2\cdot (3^{m-1}\pm 3^{\frac{m}{2}-1})},\quad \overline{A}_{2\cdot 3^{m-1}\pm 3^{\frac{m}{2}-1}}.$$
  Since the coefficient matrix of this system is a Vandermonde matrix, it has a unique solution, which is
  \begin{align*}
     & \overline{A}_{2\cdot (3^{m-1}\pm 3^{\frac{m}{2}-1})}=\frac{3^m(3^m-1)}{2}, \\
     & \overline{A}_{2\cdot 3^{m-1}\pm 3^{\frac{m}{2}-1}}=3^m(3^m-1).
  \end{align*}
  It follows that
  $$k_+=k_-=\frac{3^m(3^m-1)}{4}.$$
  \indent Now assume that the condition b) holds. Then the functions $\Pi_a:\mathbb{F}_q\rightarrow\mathbb{F}_p$ given by $x\mapsto{\rm{Tr}}_{\mathbb{F}_q/\mathbb{F}_p}(a\Pi(x))$ are weakly regular bent for all $a\in\mathbb{F}_q^*$. By Corollary \ref{20230603corolone}, we know that if $c_{a,0,0}\in\overline{\Omega}$ is a positive (resp., negative) codeword, then so is $c_{a,b,c}$ for any $b$, $c\in\mathbb{F}_q$. Hence, to prove that $k_+=k_-$, it suffices to show that half of the codewords $c_{a,0,0}$ ($a\in\mathbb{F}_q^*$) are positive while the other half are negative. By the proof of Theorem \ref{20230602themfour}, it suffices to show that half of the weakly regular bent functions $\Pi_a$ ($a\in\mathbb{F}_q^*$) have sign $+1$ while the other half have sign $-1$.\\
  \indent By Lemma \ref{20230602lemmatwo}, for any $a\in\mathbb{F}_q^*$, we have
  $$W_{\Pi_a}(0)=\epsilon_a\xi_p^{\Pi_a^*(0)}\sqrt{q},$$
  where $\epsilon_a\in\{\pm 1\}$ is the sign of $\Pi_a$. Note that
  \begin{align*}
    \sum\limits_{a\in\mathbb{F}_q^*}W_{\Pi_a}(0) & =\sum\limits_{a\in\mathbb{F}_q^*}\sum\limits_{x\in\mathbb{F}_q}\xi_p^{{\rm{Tr}}_{\mathbb{F}_q/\mathbb{F}_p}(a\Pi(x))} \\
                                                 & =\sum\limits_{x\in\mathbb{F}_q}\sum\limits_{a\in\mathbb{F}_q^*}\xi_p^{{\rm{Tr}}_{\mathbb{F}_q/\mathbb{F}_p}(a\Pi(x))} \\
                                                 & =\sum\limits_{\substack{x\in\mathbb{F}_q                                                                              \\\Pi(x)=0}}(q-1)+\sum\limits_{\substack{x\in\mathbb{F}_q\\\Pi(x)\ne 0}}(-1)\\
                                                 & =|\Pi^{-1}(0)|(q-1)-(q-|\Pi^{-1}(0)|)                                                                                 \\
                                                 & =q|\Pi^{-1}(0)|-q=0,
  \end{align*}
  which implies that
  \begin{align}\label{20230604equationtwo}
    \sum\limits_{a\in\mathbb{F}_q^*}\epsilon_a\xi_p^{\Pi_a^*(0)}=0.
  \end{align}
  The above equality can be reformulated as
  $$\sum\limits_{r=0}^{p-1}\big(\sum\limits_{\substack{a\in\mathbb{F}_q^*\\\Pi_a^*(0)=r}}\epsilon_a\big)\xi_p^r=0,$$
  which implies that $\xi_p$ is a root of the polynomial
  $$\sum\limits_{r=0}^{p-1}\big(\sum\limits_{\substack{a\in\mathbb{F}_q^*\\\Pi_a^*(0)=r}}\epsilon_a\big)X^r\in\mathbb{Z}[X].$$
  Since the minimal polynomial of $\xi_p$ over $\mathbb{Q}$ is $X^{p-1}+X^{p-2}+\cdots+1$, the sums
  $$S_r=\sum\limits_{\substack{a\in\mathbb{F}_q^*\\\Pi_a^*(0)=r}}\epsilon_a,\quad 0\le r\le p-1$$
  must be all the same. Since the map $\mathbb{F}_q^*\rightarrow\mathbb{F}_p$ given by $a\mapsto\Pi_a^*(0)$ is not surjective, we have $S_i=0$ for some $0\le i\le p-1$, which implies that $S_r=0$ for any $0\le r\le p-1$. However,
  \begin{align*}
        & \sum\limits_{r=0}^{p-1}S_r  =\sum\limits_{a\in\mathbb{F}_q^*}\epsilon_a           \\
    =\  & \#\{a\in\mathbb{F}_q^*:\ \epsilon_a=1\}-\#\{a\in\mathbb{F}_q^*:\ \epsilon_a=-1\}.
  \end{align*}
  Therefore, half of the weakly regular bent functions $\Pi_a$ ($a\in\mathbb{F}_q^*$) have sign $+1$ while the other half have sign $-1$. This completes the proof.
\end{proof}

\begin{remark}
  From the proof of Theorem \ref{202363theoremsix}, we can observe that
  when $m$ is odd, positive codewords and negative codewords in $\overline{\Omega}$ can be mutually transformed through the $G$-action, whereas when $m$ is even, the positive and negative codeword sets are both $G$-stable. This leads to the need for additional conditions to ensure an equal number of positive and negative codewords in $\overline{\Omega}$ when $m$ is even.
\end{remark}

\begin{remark}
  Assume that $\Pi$ is of Dembowski-Ostrom type or  Coulter-Matthews type. By \cite[Lemma 2]{feng2007value}, $\Pi^{-1}(0)=\{0\}$. By \cite[Lemma 3 ii)]{feng2007value}, $\Pi$ is weakly regular perfect nonlinear and the map $\mathbb{F}_q^*\rightarrow\mathbb{F}_p$ given by $a\mapsto\Pi_a^*(0)$ is the zero constant map. Therefore, all known perfect nonlinear functions $\Pi:\mathbb{F}_q\rightarrow\mathbb{F}_q$ satisfy the condition b).
\end{remark}

\begin{example}
  By Theorem \ref{202363theoremsix} and \cite{codetable}, the linear codes $\overline{C_{\Pi}}$ contain the following optimal codes:
  \begin{align*}
    [9,5,4;3], \quad[27,7,15;3],\quad [125,7,95;5],
  \end{align*}
  and the following best known codes:
  \begin{align*}
    [81,9,48;3],\quad[243,11,153;3].
  \end{align*}
\end{example}

\indent Finally, we determine the weight distribution of the linear code $C_{\Pi}$, which can also be done by considering the $G$-action and calculating the first few Pless power moments.

\begin{theorem}\label{20230601themthree}
  Let $\Pi:\mathbb{F}_{q}\rightarrow\mathbb{F}_{q}$ be a perfect nonlinear function with $\Pi(0)=0$. Assume that $\Pi(cx)\ne c\Pi(x)$ for any $x\in\mathbb{F}_q^*$ and $c\in\mathbb{F}_p\backslash\{0,1\}$.
  \begin{enumerate}
    \item If $m$ is odd, then $A_i=0$ except for the values
          \begin{align*}
             & A_0=1,                                                                               \\
             & A_{(p-1) p^{m-1}-p^{\frac{m-1}{2}}}=(p-1)(p^m-1)\frac{p^{m-1}+p^{\frac{m-1}{2}}}{2}, \\
             & A_{(p-1) p^{m-1}}=(p^{m-1}+1)(p^m-1),                                                \\
             & A_{(p-1) p^{m-1}+p^{\frac{m-1}{2}}}=(p-1)(p^m-1)\frac{p^{m-1}-p^{\frac{m-1}{2}}}{2}.
          \end{align*}
          In particular, the minimum distance of the code $C_{\Pi}$ is $(p-1)p^{m-1}-p^{\frac{m-1}{2}}$ and $C_{\Pi}$ is a $3$-weight code.
    \item If $m$ is even, then $A_i=0$ except for the values
          \begin{align*}
             & A_0=1,                                                                                            \\
             & A_{(p-1)(p^{m-1}-p^{\frac{m}{2}-1})}=\frac{p^m-1}{2}(p^{m-1}+p^{\frac{m}{2}}- p^{\frac{m}{2}-1}), \\
             & A_{(p-1)p^{m-1}-p^{\frac{m}{2}-1}}=\frac{p^m-1}{2}(p-1)(p^{m-1}+p^{\frac{m}{2}-1}),               \\
             & A_{(p-1) p^{m-1}}=p^m-1,                                                                          \\
             & A_{(p-1)p^{m-1}+p^{\frac{m}{2}-1}}=\frac{p^m-1}{2}(p-1)(p^{m-1}-p^{\frac{m}{2}-1}),               \\
             & A_{(p-1) (p^{m-1}+p^{\frac{m}{2}-1})}=\frac{p^m-1}{2}(p^{m-1}-p^{\frac{m}{2}}+p^{\frac{m}{2}-1}),
          \end{align*}
          assuming furthermore that one of the following conditions holds:
          \begin{enumerate}
            \item $p=3$,
            \item $\Pi^{-1}(0)=\{0\}$, $\Pi$ is weakly regular perfect nonlinear and the map $\mathbb{F}_q^*\rightarrow\mathbb{F}_p$ given by $a\mapsto\Pi_a^*(0)$ is not surjective, where $\Pi_a=f_{a,0,0}$ and $\Pi_a^*$ is the dual of $\Pi_a$.
          \end{enumerate}
          In particular, the minimum distance of $C_{\Pi}$ is $(p-1)(p^{m-1}-p^{\frac{m}{2}-1})$ and  $C_{\Pi}$ is a $5$-weight code.
  \end{enumerate}
\end{theorem}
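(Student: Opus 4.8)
The plan is to split the codewords of $C_\Pi$ according to whether $a=0$ and to reduce everything to the value distributions already recorded in Theorem \ref{20230602themtwo} and Theorem \ref{20230602themfour}. If $a=b=0$ then $c_{0,0}$ is the zero codeword, giving $A_0=1$; if $a=0$ but $b\ne 0$ then $x\mapsto\mathrm{Tr}_{\mathbb{F}_q/\mathbb{F}_p}(bx)$ is balanced on $\mathbb{F}_q$, so on $\mathbb{F}_q^*$ the value $0$ is attained $p^{m-1}-1$ times and $c_{0,b}$ has weight $(p-1)p^{m-1}$; there are $p^m-1$ such codewords. For $a\ne 0$ the function $f_{a,b}$ is perfect nonlinear, hence bent, and since $\Pi(0)=0$ we have $f_{a,b}(0)=0$. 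Thus if $(n_0,\dots,n_{p-1})$ is the value distribution of $f_{a,b}$ on $\mathbb{F}_q$, puncturing at $x=0$ gives
$$w_H(c_{a,b})=(q-1)-(n_0-1)=q-n_0.$$
Feeding the two possible value distributions from Theorem \ref{20230602themtwo} (odd $m$) or Theorem \ref{20230602themfour} (even $m$) into this formula produces exactly the weights in the statement, and in particular shows that no $a\ne 0$ codeword has weight $(p-1)p^{m-1}$ when $m$ is even (so that $A_{(p-1)p^{m-1}}=p^m-1$ there).

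Next I would organize the codewords with $a\ne 0$ under the scaling action $\alpha\cdot c_{a,b}=c_{\alpha a,\alpha b}$ of $\mathbb{F}_p^*$, which is free since $(a,b)\mapsto c_{a,b}$ is a bijection onto the $[q-1,2m]_p$ code and $a\ne 0$. Because the zero set of $\alpha f_{a,b}$ equals that of $f_{a,b}$, the integer $n_0$, and hence the weight, is constant on each orbit. The residual task is enumerative, and for it I would use the Pless power moments \cite[p.~90]{pless1998handbook}. The decisive input is the dual distance: a weight-$1$ word of $C_\Pi^\perp$ would force $\lambda_{x_0}x_0=0$, which is impossible, and a weight-$2$ word supported on $x_1,x_2$ would force $x_2=cx_1$ and $\Pi(x_2)=c\Pi(x_1)$ for some $c\in\mathbb{F}_p^*$, i.e.\ $\Pi(cx_1)=c\Pi(x_1)$ with $c\ne 1$. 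The hypothesis $\Pi(cx)\ne c\Pi(x)$ for $c\in\mathbb{F}_p\setminus\{0,1\}$ is precisely what excludes this, so $C_\Pi^\perp$ has minimum distance at least $3$ and the $0$-th, first and second power moments are free of unknown dual weights.

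When $m$ is odd there are only three nonzero weights; writing their multiplicities as unknowns (the middle weight receiving the extra $p^m-1$ from the $a=0$ words), the three clean moments give a nonsingular Vandermonde-type system whose unique solution is the asserted distribution. When $m$ is even there are five nonzero weights for $a\ne 0$, so four unknowns survive and one further relation is needed. By the sketch proof of Theorem \ref{20230602themfour} the type of $f_{a,b}$ is positive or negative according to the sign of $f_{a,b}$ at $0$, which equals the sign of $\Pi_a$ at $b$ because $W_{f_{a,b}}(0)=W_{\Pi_a}(b)$; under the scaling action this sign is now \emph{preserved} (the $\mathbb{F}_p^*$-twist by a Legendre symbol that mixed the signs for odd $m$ is absent for even $m$), so positive and negative codewords form two $\mathbb{F}_p^*$-stable families that must be counted separately. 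Under condition (b) I would invoke Corollary \ref{20230603corolone} to see that the sign of $f_{a,b}$ is independent of $b$, and then rerun the sign-balancing computation from the proof of Theorem \ref{202363theoremsix} — equation \eqref{20230604equationtwo} together with the non-surjectivity of $a\mapsto\Pi_a^*(0)$ — to conclude that exactly half of the pairs $(a,b)$ with $a\ne 0$ are positive and half negative; this supplies the fourth equation and the $4\times4$ system has a unique solution. Under condition (a), $p=3$, I would instead take the minimum distance of $C_\Pi^\perp$ furnished by \cite{carlet2005linear} and compute one additional (third) Pless power moment, which remains clean and closes the system without weak regularity.

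The hard part is the even case. Unlike the odd case, the scaling action cannot convert positive codewords into negative ones, so the enumeration genuinely requires an independent determination of the positive/negative balance; everything else is linear algebra once that balance is known. Securing it — via the character-sum identity behind \eqref{20230604equationtwo} under condition (b), or via a dual distance large enough to license an extra clean moment under condition (a) — is the crux of the argument, and is exactly the point at which the extra hypotheses for even $m$ become necessary.
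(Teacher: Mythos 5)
Your overall architecture is sound and largely parallels the paper's own proof: the split according to $a=0$ or $a\ne 0$, the weight formula $w_H(c_{a,b})=q-n_0$ (valid because $\Pi(0)=0$), the dual-distance argument (your direct weight-$1$/weight-$2$ verification is exactly the content of \cite[Theorem 6]{carlet2005linear}, which the paper simply cites), the three clean moments for odd $m$, and, for even $m$ under condition (b), the use of Corollary \ref{20230603corolone} plus the sign-balancing identity (\ref{20230604equationtwo}) to conclude that exactly half of the pairs $(a,b)$ with $a\ne 0$ give positive codewords. That last step correctly reproduces, inside $C_\Pi$, the counting the paper performs once and for all in Theorem \ref{202363theoremsix}; the paper then transfers Table I to $C_\Pi$ via the identity $\overline{\Omega}=(\{1\}\times\mathbb{F}_p)\cdot\Omega$, which yields the two relations (\ref{20230601equationfour}) uniformly under conditions (a) \emph{and} (b), so no second case analysis is needed there.

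The genuine gap is your treatment of condition (a) ($p=3$, $m$ even). You need a fourth clean equation and propose to get it from the moment $\sum_j j^3\,A_j$, claiming it "remains clean" by the dual distance of $C_\Pi^\perp$ "furnished by \cite{carlet2005linear}". But that moment is free of dual-weight terms only if $B_3=0$, i.e.\ only if $d(C_\Pi^\perp)\ge 4$, whereas the available statement for $C_\Pi^\perp$ (and the one this paper cites) is only $d(C_\Pi^\perp)\ge 3$; with that alone your system is three equations in four unknowns and does not close. The gap is repairable: for $p=3$, \cite[Theorem 7]{carlet2005linear} gives $d(\overline{C_\Pi}^{\perp})=5$, and any weight-$3$ word of $C_\Pi^\perp$ supported on $x_1,x_2,x_3$ with coefficients $\lambda_1,\lambda_2,\lambda_3$ extends to a word of $\overline{C_\Pi}^{\perp}$ of weight at most $4$ by placing the coefficient $-(\lambda_1+\lambda_2+\lambda_3)$ at the coordinate $x=0$ (this uses $f_{a,b,c}(0)=\mathrm{Tr}_{\mathbb{F}_q/\mathbb{F}_p}(c)$, i.e.\ $\Pi(0)=0$), contradicting the distance-$5$ bound; hence $d(C_\Pi^\perp)\ge 4$ and $\sum_j j^3\,A_j$ is indeed clean. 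You must supply this lifting argument (or an equivalently strong citation); without it condition (a) is not covered. The paper avoids the issue entirely by running the four-moment computation on $\overline{C_\Pi}$, where the distance-$5$ dual is directly usable, and then pulling the resulting Table I back to $C_\Pi$.
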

\begin{proof}
  Note that $w_H(c_{a,b})=w_H(c_{a,b,0})$ for any $a$,  $b\in\mathbb{F}_{q}$. By \cite[Theorem 6]{carlet2005linear}, the minimum distance of the dual code $C_{\Pi}^{\perp}$ of $C_{\Pi}$ is at least $3$.\\
  \indent  We first treat the case where $m$ is odd. By Theorem \ref{202363theoremsix}, all the possible nonzero Hamming weights in $C_{\Pi}$ are $(p-1)p^{m-1}$ and $(p-1)p^{m-1}\pm p^{\frac{m-1}{2}}$. Then the assertion follows by calculating the first three Pless power moments, which has been done in {\cite[Theorem 2]{yuan2006weight}}.\\
  \indent Next we treat the case where $m$ is even. If $a=0$ and $b\ne 0$, then $w_H(c_{a,b,0})=(p-1)p^{m-1}$. There are $p^m-1$ such codewords. Put
  $$\Omega=\{c_{a,b,0}\in\overline{\Omega}\}.$$
  By Theorem \ref{202363theoremsix}, any codeword $c_{a,b,0}\in\Omega$ is an $s^+$-codeword or an $s^-$-codeword, where $0\le s\le p-1$. It is clear that there are $A_{(p-1)(p^{m-1}-p^{\frac{m}{2}-1})}$ $0^+$-codewords in $\Omega$, $A_{(p-1)(p^{m-1}+p^{\frac{m}{2}-1})}$ $0^-$-codewords in $\Omega$, $A_{(p-1)p^{m-1}+p^{\frac{m}{2}-1}}$ strictly positive codewords in $\Omega$ and $A_{(p-1)p^{m-1}-p^{\frac{m}{2}-1}}$ strictly negative codewords in $\Omega$. If $c_{a,b,0}\in\Omega$ is a positive codeword, then in $(\{1\}\times\mathbb{F}_p)\cdot c_{a,b,0}$, there is exactly one $s^+$-codeword for any $0\le s\le p-1$. Similar conclusion holds for negative codewords. Since $\overline{\Omega}=(\{1\}\times\mathbb{F}_{p})\cdot\Omega$, by Table I we have
  \begin{align}\label{20230601equationfour}
     & A_{(p-1)(p^{m-1}-p^{\frac{m}{2}-1})}+A_{(p-1)p^{m-1}+p^{\frac{m}{2}-1}}=\frac{p^m(p^m-1)}{2},       \\
     & A_{(p-1)(p^{m-1}+p^{\frac{m}{2}-1})}+A_{(p-1)p^{m-1}-p^{\frac{m}{2}-1}}=\frac{p^m(p^m-1)}{2}.\notag
  \end{align}
  By calculating the second and the third Pless power moments, we have
  \begin{align}\label{20230601equationthree}
     & \sum\limits_{j=0}^{p^m-1}jA_j=p^{2m-1}(p-1)(p^m-1),                                  \\
     & \sum\limits_{j=0}^{p^m-1}j^2A_j=p^{2m-2}(p-1)(p^m-1)\big(p+(p-1)(p^m-2)\big). \notag
  \end{align}
  Then (\ref{20230601equationfour}) and (\ref{20230601equationthree}) form a system of linear equations with variables
  $$A_{(p-1)p^{m-1}\pm p^{\frac{m}{2}-1}}\quad\mbox{and}\quad A_{(p-1)(p^{m-1}\pm p^{\frac{m}{2}-1})}.$$
  It turns out that this system has a unique solution, i.e., the one given in the theorem. This completes the proof.
\end{proof}

\begin{remark}
  If $\Pi$ is of  Coulter-Matthews type, then $\Pi$ is even, i.e., $\Pi(x)=\Pi(-x)$ for any $x\in\mathbb{F}_{3^m}$. Assume that there exists $x\in\mathbb{F}_q^*$ such that $\Pi(2x)=2\Pi(x)$, then
  $$\Pi(x)=\Pi(-x)=\Pi(2x)=2\Pi(x),$$
  which implies that $\Pi(x)=0$. However, by \cite[Lemma 2]{feng2007value}, we have $\Pi^{-1}(0)=\{0\}$, which is a contradiction. Therefore, $\Pi$ satisfies the conditions in Theorem \ref{20230601themthree}.
\end{remark}
\begin{remark}
  If $\Pi$ is of Dembowski-Ostrom type, then
  $$\Pi(x)=\sum\limits_{0\le i\le j\le m-1}a_{ij}x^{p^i+p^j}$$
  for some $a_{ij}\in\mathbb{F}_q$. If there exists $y\in\mathbb{F}_q^*$ and $c\in\mathbb{F}_p^*\backslash\{0,1\}$ such that $\Pi(cy)=c\Pi(y)$, then we have
  \begin{align*}
    c\Pi(y)=\Pi(cy) & =\sum\limits_{0\le i\le j\le m-1}a_{ij}(cy)^{p^i+p^j}             \\
                    & =\sum\limits_{0\le i\le j\le m-1}a_{ij}c^{p^i+p^j}y^{p^i+p^j}     \\
                    & = \sum\limits_{0\le i\le j\le m-1}a_{ij}c^{p^i}c^{p^j}y^{p^i+p^j} \\
                    & =\sum\limits_{0\le i\le j\le m-1}a_{ij}c^{2}y^{p^i+p^j}           \\
                    & = c^2\sum\limits_{0\le i\le j\le m-1}a_{ij}y^{p^i+p^j}=c^2\Pi(y).
  \end{align*}
  Since $\Pi$ is even and $\Pi(0)=0$, by \cite[Lemma 2]{feng2007value}, we have $\Pi(y)\ne 0$. It follows that $c^2=c$, which implies that $c=0$ or $1$. This is a contradiction. Therefore, $\Pi$ satisfies the conditions in Theorem \ref{20230601themthree}.
\end{remark}

\indent As a summary, in Theorem \ref{202363theoremsix} and Theorem \ref{20230601themthree}, we describe completely the weight distributions of the linear codes $C_{\Pi}$ and $\overline{C_{\Pi}}$ only using the assumption that $\Pi$ has perfect nonlinearity for all odd primes $p$ when $m$ is odd and give some mild additional conditions for similar conclusions to hold when $m$ is even. As shown in the remarks, these two theorems cover all the previous relevant results (e.g., \cite[Theorem 2]{li2008covering}, \cite[Theorem 1, 2]{li2008covering}, \cite[Theorem 4]{li2008properties}) as special cases.\\
\indent The approach employed in this paper, which studies the evolution of the codewords of a linear code under some group action is valuable and typical. It could potentially be extended to investigate other linear codes constructed from perfect nonlinear functions or bent functions (e.g., the linear codes considered in \cite{zheng2020subfield}).

\bibliographystyle{IEEEtranS}
\bstctlcite{IEEEexample:BSTcontrol}
\bibliography{references}{}

\end{document}